\documentclass[prx,aps,floatfix,amsmath,amssymb,superscriptaddress,tightenlines,twocolumn]{revtex4}
\usepackage{graphicx}
\usepackage{hyperref}
\usepackage{amsmath}
\usepackage{tikz}
\usepackage{amssymb}

\newcommand{\Rom}[1]{\uppercase\expandafter{\romannumeral #1\relax}}

\usepackage{amsthm}

\newtheorem*{assumption*}{\assumptionnumber}
\providecommand{\assumptionnumber}{}
\makeatletter
\newenvironment{assumption}[2]
{%
	\renewcommand{\assumptionnumber}{Assumption $\mathbf{#1}$}%
	\begin{assumption*}%
		\protected@edef\@currentlabel{$\mathbf{#1}$}%
	}
	{%
	\end{assumption*}
}

\makeatother

\theoremstyle{definition}

\newtheorem*{remark}{Remark}

\theoremstyle{theorem}
\newtheorem{theorem}{Theorem}[section] 

\theoremstyle{corollary}

\theoremstyle{lemma}

\theoremstyle{Proposition} 
\newtheorem{Proposition}[theorem]{Proposition}

\begin{document}
	\title{Seeing topological entanglement through the information convex }
	
	\author{Bowen Shi}
	\affiliation{Department of Physics, The Ohio State University, Columbus, OH 43210, USA}
	
	\date{\today}
	
\begin{abstract}
	The information convex allows us to look into certain information-theoretic constraints in  two-dimensional topological orders. We provide a  derivation of the topological contribution $\ln d_a$ to the von Neumann entropy, where $d_a$ is the quantum dimension of anyon $a$.   This value emerges as the only value consistent with strong subadditivity, assuming a certain topological dependence of the information convex structure. In particular, it is assumed that the fusion multiplicities are coherently encoded in a 2-hole disk. A similar contribution ($\ln d_{\alpha}$) is derived for gapped boundaries. This method further allows us to identify the fusion probabilities and certain constraints on the fusion theory. We also derive a linear bound on the circuit depth of unitary non-Abelian  string operators and discuss how it generalizes and changes in the presence of a gapped boundary. 
\end{abstract}

	\pacs{}
	\maketitle

\section{Introduction}

Topological orders in two dimensions (2D) \cite{Wen:1989iv,PhysRevB.41.9377} are a long-range entangled \cite{2010PhRvB..82o5138C}  gapped phase of matter whose ground state has topological entanglement entropy (TEE) \cite{2006PhRvL..96k0404K,2006PhRvL..96k0405L}. Its ground-state degeneracy depends on the system topology and these ground states are locally indistinguishable.
Topological orders support \emph{topological excitations}, i.e., anyons in the 2D bulk.
 These topological excitations cannot be created by local operators, but they can be created by  string operators.
When considering excited states with a few topological excitations, the universal topological contribution of von Neumann entropy from each superselection sector can also be identified.
Each \emph{bulk superselection sector} or anyon type $a$  contributes to the von Neumann entanglement of a certain subsystem by $\ln d_a$, where $d_a$ is the quantum dimension of anyon $a$ \cite{2006PhRvL..96k0404K,2008JHEP...05..016D}.

Gapped boundaries may exist in nonchiral topological orders \cite{1998quant.ph.11052B,2011CMaPh.306..663B,2012CMaPh.313..351K,2012arXiv1211.4644K,Levin2013,
	2015PhRvL.114g6402L,2015PhRvL.114g6401H,
	2018JHEP...01..134H,2017PhRvB..96p5138B,2017CMaPh.355..645C,2018JHEP...06..113C,2019PhRvB..99c5112S}. In the presence of a gapped boundary, there are deconfined boundary topological excitations on the 1D gapped boundary. They carry \emph{boundary superselection sectors} and have their own fusion rules. When moving a bulk anyon onto the gapped boundary and measuring its sector, the outcome will be a certain boundary superselection sector. We will refer to this phenomena as bulk-to-boundary condensation. A $\ln d_{\alpha}$ contribution to the von Neumann entropy from each boundary superselection sector $\alpha$ is observed in \cite{2019PhRvB..99c5112S}, where $d_{\alpha}$ is the quantum dimension of $\alpha$.

Given these universal properties of 2D topological orders, it is quite natural  to ask  whether a certain property follows logically from several other properties and whether there is a unified theoretical framework to describe these properties.

 The algebraic theory of anyon is proposed as a universal framework to describe the bulk phase of the topological order \cite{2006AnPhy.321....2K}, and  related proposals \cite{2012CMaPh.313..351K,2012arXiv1211.4644K,Levin2013} for gapped boundaries (of nonchiral topological orders) have also appeared in the literature recently.

A remarkably different line of progress has been made from a quantum information theory perspective. This research brings new insights from the fundamental properties of quantum many-body states. 
It is shown that a nonzero TEE is necessary to have nontrivial ground-state degeneracy \cite{2013PhRvL.111h0503K} and to have nontrivial topological excitations \cite{2015PhRvB..92k5139K}. These results provide strong evidence that entanglement and other universal properties of topologically ordered systems are nontrivially related. 
Focusing on the properties of a quantum state allows us to overcome certain intrinsic difficulties of interacting many-body systems, e.g., the fact that only particular forms of Hamiltonians can be solved for interacting many-body systems.
Moreover, it is shown that a long-range entangled ground state of a topologically ordered system could be converted to a product state only with a quantum circuit whose depth scales at least linearly with the system size~\cite{Haah2016}. 

In a recent work, the information convex  $\Sigma(\Omega)$ has been proposed as a characterization of topological order \cite{2019PhRvB..99c5112S}. It applies to both the bulk and the gapped boundary. $\Sigma(\Omega)$ is a set of density matrices on subsystem $\Omega$; each element is obtained from a density matrix minimizing the Hamiltonian on a subsystem larger than $\Omega$ by many correlation lengths. For topological orders, the information convex is argued to be a low-dimensional convex set with elements locally indistinguishable from the ground state. The structure of the information convex depends on the topology of the subsystem. The information convex  characterizes the fusion and condensation multiplicities in addition to the bulk superselection sectors, the boundary superselection sectors, and their quantum dimensions. It is also a convenient framework to summarize many of the known properties of topological orders.

In this work, we look into certain information-theoretic constraints on the information convex structure.
We show that certain properties of the information convex  follow from  others. In particular, we  derive the $\ln d_a$, $\ln d_{\alpha}$ topological contributions of entropy by showing that they are the only values consistent with the strong subadditivity (SSA) \cite{1973JMP....14.1938L} and a set of assumptions concerning the topology dependence of the information convex structure. 
This method also allows us to calculate the fusion probability of bulk anyons, boundary topological excitations, and the condensation probability from the bulk to a gapped boundary. Also derived are some consistency conditions on the fusion theory. Moreover, we derive a linear bound on the circuit depth of the unitary non-Abelian string operator and discuss how this result generalizes and changes in the presence of a gapped boundary.

The results indicate that the information convex may be treated as a theoretical object having its predictive power from the self-consistency relations, and therefore it is more than a convenient tool to summarize results obtained by other methods. 
We hope this work also extends our toolbox and provide us with another way to explore the fusion rules of loop-like excitations in 3D topological orders and their TEE contributions.

The rest of the paper is organized as follows.
In Sec.~\ref{ground state and excitations}, we briefly review the ground states of topological orders, the topological excitations, and some basics of the information convex. Relevant properties are summarized into a few assumptions. 
In Sec.~\ref{Sec_Bulk}, we describe a key assumption about how the information convex encodes fusion multiplicities and then provide a derivation of the $\ln d_a$ topological contribution to the von Neumann entropy.
In Sec.~\ref{Section:boundary}, we derive a similar contribution ($\ln d_{\alpha}$) from each boundary superselection sector $\alpha$.
In Sec.~\ref{Section:condensation}, we derive a relation about the condensation from bulk to a gapped boundary.
In Sec.~\ref{Section_Info}, we provide some more details of the method which reveals the physical interpretations of several probabilities calculated in earlier sections, and we also discuss the circuit depth of non-Abelian anyon strings.
In Sec.~\ref{Sec:summary}, we conclude with a summary.

\section{The ground states and  excitations of 2D topological orders}\label{ground state and excitations}

\subsection{The ground states}

In this subsection, we summarize relevant properties of topologically ordered ground states into Assumptions \ref{as:G0}, \ref{as:G1}, \ref{as:G2} and \ref{as:G3}. While there are other equivalent ways to write down the assumptions,  we choose to write down the assumptions using \emph{mutual information} and \emph{conditional mutual information}. In this way, it is easier to borrow tools from quantum information theory. We further emphasize that information-theoretic considerations are physically relevant because every physical quantum many-body system obeys the law of quantum mechanics.
Note that a very similar set of assumptions about the topologically ordered ground states has been used in \cite{2016PhRvA..93b2317K}.

In the following, we will use $\rho$, $\sigma$ to denote (reduced) density matrices, and use subscripts $\Omega$, $ABC$  for the subsystems. We will use $S(\sigma)\equiv -\textrm{tr} (\sigma \ln \sigma)$ to denote the von Neumann entropy.  The mutual information is defined as
\begin{equation}
I(A:C)_{\rho}\equiv S(\rho_A) + S(\rho_B) -S(\rho_{AB}) \nonumber
\end{equation}
and the conditional mutual information is defined as
\begin{equation}
I(A:C\vert B)_{\rho}\equiv S(\rho_{AB}) +S(\rho_{BC})-S(\rho_B)-S(\rho_{ABC}). \nonumber
\end{equation}
The strong subadditivity (SSA) \cite{1973JMP....14.1938L}  is the statement that
$I(A:C\vert B)_{\rho}\ge 0$ for any density matrix $\rho_{ABC}$. It is a very powerful and miraculous statement which is known to be the source of nontrivial insights.  As a special case, SSA implies $I(A:C)_{\rho}\ge 0$ for any density matrix  $\rho_{AC}$.
For pedagogical discussions of mutual information, conditional mutual information, and the quantum information theory behind them, we refer the readers to \cite{preskill1998lecture, nielsen2002quantum, 2018arXiv180511965W}.

To be concrete, we assume that the system is on either an infinite plane or a half plane with a single gapped boundary, so that we have only a single ground state $\vert \psi\rangle$, and let us denote the ground state density matrix as
\begin{equation}
\sigma^1\equiv \vert\psi\rangle \langle \psi\vert.
\end{equation}
The method could be adapted to other manifolds.
 The ground state is gapped; we neglect the correlations at a large enough length scale $\epsilon$. We will talk about the topology of a subsystem, and it is understood that the topology is well defined only at a length scale above $\epsilon$.

\begin{figure}[h]
	\centering
	\includegraphics[scale=1.0]{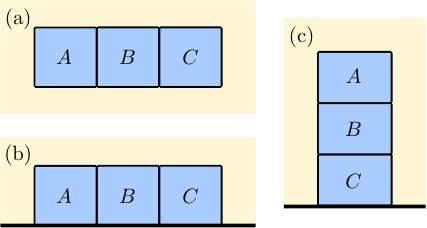}
	%\begin{tikzpicture}
	%\node[] (E) at (0,0) {\includegraphics[scale=0.8]{ABC_Trivial_PDF}};
	%\node[] (A) at (-2.52,0.95) {$A$};
	%\node[] (B) at (-1.45,0.95) {$B$};
	%\node[] (C) at (-0.38,0.95) {$C$};
	%\node[] (A) at (-2.52,-1.39) {$A$};
	%\node[] (B) at (-1.45,-1.39) {$B$};
	%\node[] (C) at (-0.38,-1.39) {$C$};
	%\node[] (A) at (2.38,0.63) {$A$};
	%\node[] (B) at (2.38,-0.26) {$B$};
	%\node[] (C) at (2.38,-1.15) {$C$};
	%\node[] (A) at (-3.3,1.6) {(a)};
%	\node[] (B) at (-3.3,-0.7) {(b)};
%	\node[] (C) at (1.5,1.3) {(c)};
%	\end{tikzpicture}
	\caption{(a) The bulk subsystems $A$, $B$, $C$ considered in assumption \ref{as:G1}. (b) The  subsystems  $A$, $B$, $C$  attached to a gapped boundary considered in assumption \ref{as:G2}. (c) The  subsystems  $A$, $B$, $C$  considered in assumption \ref{as:G3}.}
	\label{ABC_Trivial}
\end{figure}

\begin{assumption}{G_0}{}\label{as:G0}
	Consider the ground state of a topological order on an infinite plane or a half plane bounded by a gapped boundary. For arbitrary subsystems $A$ and  $C$ which are separated by a distance greater than some length scale $\epsilon$, the mutual information vanishes:
	\begin{equation}
	I(A:C)_{\sigma^1}=0. \nonumber \label{eq: Assumption_G0}
	\end{equation}
	Here $A$, $C$  can be either in the bulk or touch  the boundary.
\end{assumption}

\begin{remark}
	It is possible to generalize \ref{as:G0} to other system topologies. However, one should be aware of potential counterexamples. For example, on  a generic  ground state of a torus $T^2$, it is possible to pick two annuli $A$ and $C$ which are separated by a large distance and $I(A:C)> 0$; see \cite{2015arXiv150807006J}. In this case, each  annulus could not shrink to a point by continuous deformation on $T^2$. Also notice that if \ref{as:G0} is satisfied, and $\tilde{\sigma}^1$ is related to $\sigma^1$ by a finite-depth quantum circuit, then $I(A:C)_{\tilde{\sigma}^1}=0$ when $A$ and $C$ are separated by a length scale $\epsilon'$ equal to $\epsilon$ plus circuit depth. 
\end{remark}

The following three assumptions, Assumptions  \ref{as:G1}, \ref{as:G2} and \ref{as:G3}, are statements about the vanishing of conditional mutual information for certain subsystem choices.

\begin{assumption}{G_1}{} \label{as:G1}
	For any subsystem choice ABC topologically equivalent to the one shown in Fig.~\ref{ABC_Trivial}(a),
	\begin{equation}
	I(A:C\vert B)_{\sigma^1}=0.\nonumber
	\end{equation}
\end{assumption}

\begin{assumption}{G_2}{} \label{as:G2}
	For any subsystem choice ABC topologically equivalent to the one shown in Fig.~\ref{ABC_Trivial}(b),
	\begin{equation}
	I(A:C\vert B)_{\sigma^1}=0.\nonumber
	\end{equation}
\end{assumption}

\begin{assumption}{G_3}{} \label{as:G3}
	For any subsystem choice ABC topologically equivalent to the one shown in Fig.~\ref{ABC_Trivial}(c),
	\begin{equation}
	I(A:C\vert B)_{\sigma^1}=0.\nonumber
	\end{equation}
\end{assumption}

\begin{remark}
	Assumptions \ref{as:G1}, \ref{as:G2} and \ref{as:G3} may be understood in the same manner; i.e., the von Neumann entropy of a subsystem (of a topologically ordered ground state) can be separated into local contributions plus a universal contribution depending on the subsystem topology. For all the cases in Fig.~\ref{ABC_Trivial}, both the local contributions and the topological contributions cancel. Note, however, that unlike \ref{as:G0}, the vanishing of conditional mutual information in \ref{as:G1}, \ref{as:G2} and \ref{as:G3} may break down under finite-depth quantum circuits \cite{2016PhRvB..94g5151Z,2019PhRvL.122n0506W}. Nonetheless, all of the examples involve some sort of symmetry. It is conjectured that for a generic quantum circuit without any symmetry,  \ref{as:G1}, \ref{as:G2} and \ref{as:G3} still hold, albeit the quantum circuit  changes the scale $\epsilon$.
\end{remark}

\subsection{The excitations}

We describe three assumptions, Assumptions \ref{as:S1}, \ref{as:S2}, \ref{as:S3} about the unitary string operators. The unitary string operators, when acting upon the ground state, can create deconfined topological excitations inside the bulk or along a gapped boundary. To state the assumptions, we first need to review the superselection sectors briefly.

We use $a$, $b$  to label \emph{bulk superselection sectors} or bulk anyon types \cite{2006AnPhy.321....2K}.  Each bulk superselection sector is an equivalent class of deconfined bulk excitations, and the excitations (either a single excitation or several excitations) belonging to one class cannot be transformed into another class by any local operator.
We use $\alpha$, $\beta$  to label \emph{boundary superselection sectors} or boundary topological excitation types. Each boundary superselection sector is an equivalent class of deconfined boundary excitations \cite{2012CMaPh.313..351K}, and the boundary excitations (either a single excitation or several excitations  lying along the boundary) belonging to one class cannot be transformed into another class by any local operator acting around the boundary.

It should be noted that not every pair of excitations can be connected by a string of the type in Fig.~\ref{Unitary_String_Assumption}.
An anyon $a$ is always connected with its antiparticle $\bar{a}$ for the string operator in Fig.~\ref{Unitary_String_Assumption}(a); a boundary topological excitation $\alpha$ is always connected with its antiparticle $\bar{\alpha}$ for the string operator in Fig.~\ref{Unitary_String_Assumption}(b); $\alpha$ must be in the condensation channel of $a$ if there is a string of the type shown in Fig.~\ref{Unitary_String_Assumption}(c) which connects $a$ and $\bar{\alpha}$. We summarize these into Assumptions \ref{as:S1}, \ref{as:S2} and \ref{as:S3} below.

\begin{figure}[h]
	\centering
	\includegraphics[scale=1.0]{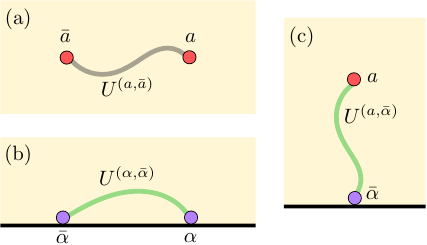}
	\caption{(a) A unitary string operator $U^{(a,\bar{a})}$ in the bulk, which creates anyon pair  $(a,\bar{a})$. (b) A unitary string operator $U^{(\alpha,\bar{\alpha})}$  which creates boundary topological excitation pair  $(\alpha,\bar{\alpha})$. (c) A unitary string operator $U^{(a,\bar{\alpha})}$ which creates a pair of topological excitations: a bulk anyon $a$ and a boundary topological excitation $\bar{\alpha}$ which is in the condensation channel of $\bar{a}$. The choice of color:  Gray strings are in the bulk and green strings touch the boundary; red dots are bulk anyons and purple dots are boundary topological excitations. }
	\label{Unitary_String_Assumption}
\end{figure}

\begin{assumption}{S_1}{} \label{as:S1}
	A pair of bulk anyons $(a,\bar{a})$, shown in  Fig.~\ref{Unitary_String_Assumption}(a), can be created by applying a unitary string operator $U^{(a,\bar{a})}$ onto the ground state. The string is inside the bulk and the anyons $a$ and $\bar{a}$ are localized in a small area at the two ends of the string. The anyons are deconfined, and the middle part of the string does not increase the energy density. 
\end{assumption}

\begin{assumption}{S_2}{} \label{as:S2}
	A pair of boundary topological excitations $(\alpha,\bar{\alpha})$ along the same gapped boundary, shown in Fig.~\ref{Unitary_String_Assumption}(b), can be created by applying a unitary string operator $U^{(\alpha,\bar{\alpha})}$ onto the ground state. The boundary topological excitations $\alpha$ and $\bar{\alpha}$ are localized in a small area at the two ends of the string. 
	The support of $U^{(\alpha,\bar{\alpha})}$ touches the boundary since the two ends are along the boundary. On the other hand, the middle part can either lie along the boundary or stretch into the bulk. The boundary topological excitations  are deconfined, and the middle part of the string does not increase the energy density.
\end{assumption}

\begin{assumption}{S_3}{} \label{as:S3}
	A pair of topological excitations $(a,\bar{\alpha})$, shown in Fig.~\ref{Unitary_String_Assumption}(c), can be created by applying a unitary string operator $U^{(a,\bar{\alpha})}$ onto the ground state if $\alpha$ is in the condensation channel of $a$. The bulk anyon $a$ and boundary topological excitations  $\bar{\alpha}$ are localized in a small area at the two ends of the string. The  topological excitations  are deconfined and the middle part of the string does not increase the energy density. 
\end{assumption}

\begin{remark}
$\,$
\begin{enumerate}
	\item Nonunitary string operators are discussed in many references. They naturally appear in exactly solvable models with non-Abelian anyons \cite{2003AnPhy.303....2K,2005PhRvB..71d5110L,2008PhRvB..78k5421B}. How do these operators relate to our unitary string operators? In fact, it follows from \ref{as:G0} that a unitary string operator exists for every nonunitary string operator, albeit the support of the unitary string operator is usually slightly thicker. A short proof is presented in Appendix \ref{appendix:string}.
	
	\item Implicitly, we have assumed that we have the freedom to choose the support of the string; i.e., the string can be deformed. We are aware that this deformation assumption has been used in showing that any system that supports anyons must have a nonvanishing topological entanglement entropy \cite{2015PhRvB..92k5139K}. 
	
	\item \ref{as:S1} says that a pair of anyons $(a,\bar{a})$ could be created applying a unitary bulk string $U^{(a,\bar{a})}$. It does not imply that any state with two anyons $(a,\bar{a})$ could be obtained in this way. A string attached to the boundary, see Fig.~\ref{Unitary_String_Depth}(b), may prepare a different quantum state with $(a,\bar{a})$ lying in the same positions.
	
    \item We aim to give clear statements about the assumptions; however, further reducing the set of assumptions requires additional work. These assumptions have been recently shown to emerge in a general theoretical framework based on an entanglement area law~\cite{2019arXiv190609376S}.
	 
	\item For non-Abelian $a$ and $\alpha$, we will discuss a linear bound on the circuit depth of $U^{(a,\bar{a})}$ and $U^{(\alpha,\bar{\alpha})}$, see Sec.~\ref{Sec._Depth}.
\end{enumerate}
\end{remark}

	\subsection{A brief review of the information convex}
In this subsection, we briefly review the notion of the \emph{information convex} introduced in \cite{2019PhRvB..99c5112S}. 
The information convex  $\Sigma(\Omega)$ is a set of density matrices on subsystem $\Omega$, each element is obtained from a density matrix minimizing the Hamiltonian on a slightly larger subsystem $\Omega'$ (which is bigger than $\Omega$ at least by the length scale $\epsilon$ in \ref{as:G0}). The original definition applies to frustration-free local Hamiltonians, and concrete calculations have been done in quantum double models \cite{2003AnPhy.303....2K,2008PhRvB..78k5421B,2011CMaPh.306..663B}. In light of the calculation, the following results are expected to hold more generally.

For topological orders, the information convex $\Sigma(\Omega)$ is a low-dimensional but nontrivial convex set, the structure of which depends on the topology of subsystem $\Omega$. It  
captures how topological excitations modify the density matrix on a subsystem away from them and how diverse the density matrices can be.
The information convex characterizes the fusion and condensation multiplicities in addition to the bulk superselection sectors $\{a \}$, boundary superselection sectors $\{ \alpha \}$ and their quantum dimensions $\{d_a\}$ and $\{d_{\alpha} \}$. The information convex also provides a convenient framework to summarize many of the previously known information-theoretic properties of topological orders.

Given the evidence that the information convex may provide a useful concept and that explicit results are only available to some particular models up to now, an  outstanding problem is to establish its theoretical foundation with more general principles. The assumptions listed below and the results derived from them should be interpreted as some initial progress toward this goal, and these assumptions should not be interpreted as the axioms in the final theoretical framework.

Also, note that the original definition with the frustration-free local Hamiltonian is not crucial here since the results in this paper can be derived as long as the assumptions hold.

\begin{figure}[h]
	\centering
	\includegraphics[scale=1.0]{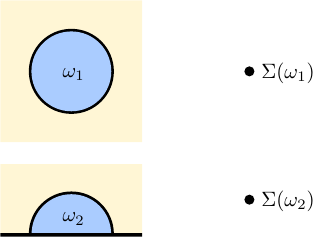}
	\caption{Subsystems $\omega_1$, $\omega_2$ and the corresponding information convex. $\Sigma(\omega_1)$  contains a single element $\sigma^1_{\omega_1}$ and  $\Sigma(\omega_2)$  contains a single element $\sigma^1_{\omega_2}$.}
	\label{figure_omega_1_2}
\end{figure}

We will consider subsystems $\omega_1$, $\omega_2$, $\Omega_1$, $\Omega_2$, $\Omega_3$, $\Omega_4$, $\Omega_5$, $\Omega_6$ in this paper. Each of them is a label of topology, and the relation to the boundary is considered as part of the topological data. Let us start with the simplest topologies $\omega_1$ and $\omega_2$, shown in Fig.~\ref{figure_omega_1_2}. 
(Recall $\vert\psi\rangle$ is the ground state, and we will use $\bar{A}$ to denote the complement of subsystem $A$.)

\begin{assumption}{\omega_1}{} \label{as:1}
	For a disk $\omega_1$ in the bulk, see Fig.~\ref{figure_omega_1_2},
	\begin{equation}
	\Sigma(\omega_1)=\{ \sigma^1_{\omega_1} \},
	\end{equation} 
	where $\sigma^1_{\omega_1} \equiv \textrm{tr}_{\bar{\omega}_1}\vert\psi\rangle \langle \psi\vert$. 
\end{assumption}

\begin{assumption}{\omega_2}{} \label{as:1'}
	For a disk $\omega_2$ attached to the gapped boundary on a single connected component, see Fig.~\ref{figure_omega_1_2},
	\begin{equation}
	\Sigma(\omega_2)=\{ \sigma^1_{\omega_2} \},
	\end{equation} 
	where $\sigma^1_{\omega_2} \equiv \textrm{tr}_{\bar{\omega}_2}\vert\psi\rangle \langle \psi\vert$.
\end{assumption}

\begin{remark}
	Assumption \ref{as:1} is a convenient  way to say that a disk cannot tell whether there are excitations in other places. If we apply \ref{as:1} to a manifold with multiple ground states, one recovers the well-known fact that all ground states are indistinguishable on a disk subsystem.  The condition \ref{as:1} is also known as the TQO-2 condition \cite{bravyi2010topological}, which is crucial in the study of perturbations in topological orders. \ref{as:1'} is simply a natural generalization of \ref{as:1} to a system with gapped boundaries.
\end{remark}

\begin{figure}[h]
	\centering
	\includegraphics[scale=1.0]{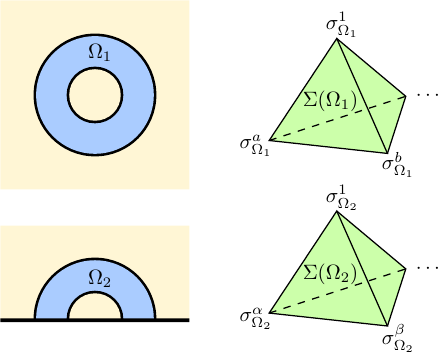}
	\caption{Bulk annulus $\Omega_1$ and half annulus $\Omega_2$ attach to the boundary  and the corresponding information convex. 
	 $\Sigma(\Omega_1)$ and $\Sigma(\Omega_2)$ are always simplex but usually not tetrahedrons. Here, ``$\cdots$" represents potentially more extreme points that are omitted.
	}\label{figure_Omega_1_2}
\end{figure}

The next two topologies $\Omega_1$ and $\Omega_2$ are relevant to superselection sectors; see Fig.~\ref{figure_Omega_1_2}.

\begin{assumption}{\Omega_1}{} \label{as:2}
	For a bulk annulus $\Omega_1$, see Fig.~\ref{figure_Omega_1_2}, 
	\begin{equation}
	\Sigma(\Omega_1) = \left\{ \sigma_{\Omega_1} \left\vert  \sigma_{\Omega_1}=\sum_{a} p_a \sigma^a_{\Omega_1}\right. \right \},
	\end{equation}
	where $a$ is the label for bulk superselection sectors (or anyon types).
	 $\{p_a\}$ is a probability distribution and $\sigma^a_{\Omega_1}$ is an extreme point. 
	 \begin{enumerate}
	 	\item Distinct extreme points are orthogonal:
	 	\begin{equation}
	 	\sigma^a_{\Omega_1}\cdot \sigma^b_{\Omega_1}=0, \quad \forall a\ne b. \label{eq:othogonal_a} 
	 	\end{equation}
	 	\item There is a universal contribution to von Neumann entropies:
	 	\begin{equation}
	 	S(\sigma^a_{\Omega_1})=S(\sigma^1_{\Omega_1}) + 2 f(a),
	 	\end{equation}
    	where $1$ is the vacuum sector; $f(a)$ is real, $f(1)=0$, and $f(a)=f(\bar{a})$; $\bar{a}$ is the antiparticle of $a$.
	    \item The extreme point $\sigma^a_{\Omega_1}$ is obtained on an annulus surrounding an anyon $a$.
	\end{enumerate}
\end{assumption}

\begin{assumption}{\Omega_2}{} \label{as:2'}
	For a half annulus $\Omega_2$ attaches to the boundary,
	 see Fig.~\ref{figure_Omega_1_2}, 
	\begin{equation}
	\Sigma(\Omega_2) = \left\{ \sigma_{\Omega_2} \left\vert  \sigma_{\Omega_2}=\sum_{\alpha} p_{\alpha} \sigma^{\alpha}_{\Omega_2} \right. \right\},
	\end{equation}
	where $\alpha$ is the label for boundary superselection sectors (or deconfined boundary topological excitations types). 
	 $\{p_{\alpha}\}$ is a probability distribution and $\sigma^{\alpha}_{\Omega_2}$ is an extreme point. 
	 \begin{enumerate}
	 	\item Distinct extreme points are orthogonal:
	 	\begin{equation}
	 	\sigma^{\alpha}_{\Omega_2}\cdot \sigma^{\beta}_{\Omega_2}=0, \quad \forall \alpha \ne \beta . \label{eq:othogonal_b}
	 	\end{equation}
	 	\item There is a universal contribution to von Neumann entropies:
	 	\begin{equation}
	 	S(\sigma^{\alpha}_{\Omega_2}) = S(\sigma^1_{\Omega_2}) + 2 F(\alpha), \label{eq:F_alpha}
	 	\end{equation}
	 	where $1$ is the vacuum sector; $F(\alpha)$ is real, $F(1)=0$, and $F(\alpha)=F(\bar{\alpha})$; $\bar{\alpha}$ is the antiparticle of $\alpha$.
	 	\item The extreme point $\sigma^{\alpha}_{\Omega_2}$ is obtained on an $\Omega_2$ surrounding a boundary topological excitation $\alpha$.
	 \end{enumerate}
\end{assumption}

\begin{remark}
Assumptions \ref{as:2} and \ref{as:2'}   summarize how $\Omega_1$ and $\Omega_2$ detect each (bulk or boundary) superselection sector and how each superselection sector contributes to the von Neumann entropy. Intuitively, the string operators $U^{(a,\bar{a})}$ ($U^{(\alpha,\bar{\alpha})}$) must pass through the subsystem $\Omega_1$ ($\Omega_2$) and this is why the density matrices are modified. Moreover, we argue that the orthogonal relation in Eq. (\ref{eq:othogonal_a}) is a many-body effect. For large system sizes,  two distinct extreme points must be (to a very good approximation) orthogonal to each other as long as (1) they are short-range entangled when viewed as a 1D system (the 1D is the radial direction), and (2) they have a finite amount of difference on each thin annulus shell. A similar argument could explain the orthogonal relation Eq. (\ref{eq:othogonal_b}). 
\end{remark}

The universal contributions to von Neumann entropies, i.e., $f(a)$ and $F(\alpha)$, are assumed to be generic real numbers. We will later derive their values  $f(a)=\ln d_a$, see Eq. (\ref{eq:d_a}), and $F(\alpha) =\ln d_{\alpha}$, see Eq. (\ref{eq:d_alpha}). They consist of the main results of this paper. Despite that these values have been obtained from other methods previously, e.g., \cite{2006PhRvL..96k0404K,2008JHEP...05..016D,2019PhRvB..99c5112S}, our derivation points to a different perspective on the origin of these numbers because the assumptions are different. In particular, we assume neither an underlying field theory nor an exactly solvable lattice model. Rather the values emerge from information-theoretic constraints.

We would also like to compare the topological contribution $\ln d_a$ in the  $S(\sigma^a_{\Omega_1})$ and the same  topological contribution  for a disk containing an anyon. It is known that a disk containing a single anyon will have von Neumann entropy bigger than the ground state von Neumann entropy by $\ln d_a$. Note, however, that this result is for an anyon pinned down to a fixed position. In general, if we allow the anyon on the disk to entangle its position with an anyon on the rest of the system, 
the entanglement entropy can further grow. Intuitively, the topological entanglement and ``particle entanglement" are added up. On the other hand, when we consider the von Neumann entropy of an element $\sigma_{\Omega}\in \Sigma(\Omega)$, there are no particles inside $\Omega$, and we neatly pick out the topological contributions even if the excitations on the rest of the system have ``particle entanglement".

We will give explicit statements about $\Sigma(\Omega_4)$, $\Sigma(\Omega_5)$ and $\Sigma(\Omega_6)$ in later sections, i.e., Assumptions \ref{as:3}, \ref{as:3'} and \ref{as:1''}. These subsystems have the feature that the fusion multiplicities $N_{ab}^c$, ${N}_{\alpha\beta}^{\gamma}$ (or condensation multiplicities $N_{a}^{\alpha}$) manifest in the structure of extreme points.
This is crucial for our method to work because fusion multiplicities contain enough information to derive the quantum dimensions $\{d_a\}$ (and $\{ d_{\alpha} \}$).  Each multiplicity greater than 1 contributes a set of extreme points parametrized by continuous variables. The interested reader may take a quick look at Fig.~\ref{Bulk_Subsystems}, \ref{Boundary_Subsystems} and \ref{Bulk_Boundary_Subsystems}.

\section{The  bulk of a 2D topological order} \label{Sec_Bulk}

For the 2D bulk of a topological order, the superselection sectors correspond to the anyon types (the vacuum sector is included) \cite{2006AnPhy.321....2K}. We will use the terminology bulk superselection sectors, anyon types, and topological charges interchangeably.
Let us label the bulk superselection sectors by $ \{ 1,a,b,c,\cdots \}$, where $1$ is the vacuum sector. The fusion rule of anyons can be written as
\begin{equation}
a\times b = \sum_c N_{ab}^{c} \,c . \label{eq:bulk fusion rule}
\end{equation}
Here $\{N_{ab}^c\}$ is the set of fusion multiplicities. They are non-negative integers satisfying $N_{ab}^c=N_{ba}^c$ and the set of conditions in Appendix~\ref{Appendix_Fusion}. 

The goal of this section is to provide a derivation of the $\ln d_a$ universal topological contribution of von Neumann entropy given the fusion multiplicities $\{ N_{ab}^c \}$. In particular, we will show that the $\ln d_a$ contribution emerges as the only value consistent with SSA and a set of assumptions about the information convex. Similar methods are applied to study boundary topological excitations along a gapped boundary (see Sec.~\ref{Section:boundary}) and bulk-to-boundary condensation (see Sec.~\ref{Section:condensation}).
A few other implications will be explored after a more detailed discussion in Sec.~\ref{Section_Info}.

\subsection{Preparing for the derivation}\label{Bulk_Assumptions}
In this subsection, we state and explain the key Assumption \ref{as:3} which describes the way $N_{ab}^c$ encoded in $\Sigma(\Omega_4)$. We further state Proposition~\ref{as:4} about the existence of a certain element which saturates SSA. The proof of proposition \ref{as:4} is provided in Sec.~\ref{Section_Info}. Both of them are crucial in the derivation of the $\ln d_a$ contribution to the von Neumann entropy.

\begin{figure}[h]
	\centering
	\includegraphics[scale=1.0]{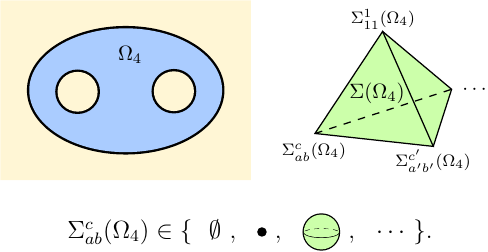}
	\caption{ The bulk subsystem $\Omega_4$, i.e., a 2-hole disk, and its information convex $\Sigma(\Omega_4)$.
		 $\Sigma(\Omega_4)$ is the set of convex combinations of $\Sigma_{ab}^c(\Omega_4)$. The geometry of each $\Sigma_{ab}^c(\Omega_4)$ depends solely on a non-negative integer $N_{ab}^c$. For example, when $N_{ab}^c=0,1$ and $2$, $\Sigma_{ab}^c(\Omega_4)$ is isomorphic to an empty set, a point, and a solid ball, respectively. }
	\label{Bulk_Subsystems}
\end{figure}

\begin{figure}[h]
	\centering
	\includegraphics[scale=1.0]{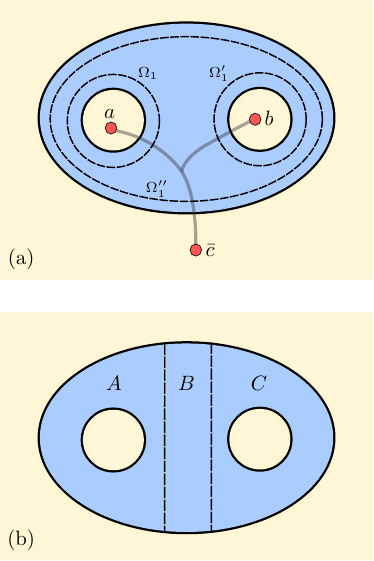}
	\caption{(a) Annuli $\Omega_1, \Omega'_1, \Omega''_1 \subseteq \Omega_4$ surrounding the three entanglement cuts could detect the topological charges.  In this picture, we have $\sigma_{\Omega_1}^a$, $\sigma^b_{\Omega'_1}$, and $\sigma^c_{\Omega''_1}$ on these subsystems. (b) $\Omega_4$ is divided into smaller pieces $\Omega_4=ABC$. Both $AB$ and $BC$ have $\Omega_1$ topology.}
	\label{Bulk_Restriction}
\end{figure}

\begin{assumption}{\Omega_4}{}  \label{as:3}
	
		For a subsystem $\Omega_4$, i.e., a 2-hole disk in the bulk, see Fig.~\ref{Bulk_Subsystems}, the information convex $\Sigma(\Omega_4)$ has the following structure:
	
	The set of extreme points of $\Sigma(\Omega_4)$ forms a set $\mathcal{M}=\bigcup_{(a,b,c)} \mathcal{M}^{c}_{ab}(\Omega_4)$. Each $\mathcal{M}^{c}_{ab}(\Omega_4)$ with $N_{ab}^c \ne 0$ is a connected component of $\mathcal{M}$. Let $\Sigma^{c}_{ab}(\Omega_4)$ be the convex subset of $\Sigma(\Omega_4)$  formed by a convex combination of elements in $\mathcal{M}^{c}_{ab}(\Omega_4)$. 
	
	Taking a partial trace to reduce  $\sigma_{\Omega_4}^{(a,b,c)_x}\in \Sigma^{c}_{ab}(\Omega_4) $ to certain subsystems of $\Omega_1$ topology surrounding each entanglement cut, i.e., the $\Omega_1$, $\Omega'_1$ and $\Omega''_1$ in Fig.~\ref{Bulk_Restriction}(a), we get extreme points  $\sigma^a_{\Omega_1}$, $\sigma^b_{\Omega'_1}$ and $\sigma^c_{\Omega''_1}$, respectively.
	
	The density matrix $\sigma_{\Omega_4}^{(a,b,c)_x}\in \Sigma^{c}_{ab}(\Omega_4)$ has a one-to-one correspondence with $\hat{\rho}^{(a,b,c)_x}$, a  density matrix on the fusion Hilbert space $\mathbb{V}_{ab}^c\equiv span \{ \vert i\rangle\}_{i=1}^{N_{ab}^c} $. Extreme points of $\Sigma^{c}_{ab}(\Omega_4)$ correspond to the pure-state density matrices on $\mathbb{V}_{ab}^c$.  The mapping preserves the convex structure, i.e.,
	\begin{equation}
	p\,\sigma_{\Omega_4}^{(a,b,c)_x} + (1-p) \sigma_{\Omega_4}^{(a,b,c)_y} =\sigma_{\Omega_4}^{(a,b,c)_z}  \nonumber
	\end{equation}
	if and only if 
	\begin{equation}
	p\,\hat{\rho}^{(a,b,c)_x} + (1-p) \hat{\rho}^{(a,b,c)_y} =\hat{\rho}^{(a,b,c)_z}, \nonumber
	\end{equation}
	where $p\in [0,1]$.
	Furthermore, $\sigma_{\Omega_4}^{(a,b,c)_x}$ has von Neumann entropy: 
	\begin{equation}
	\begin{aligned}
	S(\sigma_{\Omega_4}^{(a,b,c)_x}) =& S(\sigma_{\Omega_4}^{1}) + f(a) + f(b) + f(c)\\
	&\,+ S(\hat{\rho}^{(a,b,c)_x}). \label{Eq_Entanglement_abc}
	\end{aligned}
	\end{equation}
	Here $\sigma^1_{\Omega_4}\equiv \textrm{tr}_{\bar{\Omega}_4}\,\sigma^1$ is the unique element of $\Sigma^{1}_{11}(\Omega_4)$.
\end{assumption}

\begin{remark}
	Assumption \ref{as:3} is the statement that fusion multiplicities $N_{ab}^c$ are coherently encoded in the information convex  $\Sigma(\Omega_4)$.
	Furthermore, \ref{as:3} says that different vectors of the fusion Hilbert space could be determined (up to an overall phase factor) by looking at the element $\sigma^{(a,b,c)_x}_{\Omega_4}\in \Sigma^{c}_{ab}(\Omega_4)$.
\end{remark}

 \begin{Proposition}  \label{as:4}
 	There exists a unique element $\sigma^{(a,b)}_{\Omega_4} \in \Sigma(\Omega_4)$ such that for the partition $\Omega_4=ABC$ shown in Fig.~\ref{Bulk_Restriction}(b) the following hold:
	\begin{enumerate}
		\item It prepares the extreme point $\sigma_{AB}^a$ when restricted to $AB$ and
		it prepares the extreme point $\sigma_{BC}^{b}$ when restricted to $BC$. 
		\item It saturates the conditional mutual information:
		\begin{equation}
		I(A:C\vert B)_{\sigma^{(a,b)}}=0. \nonumber
		\end{equation}
	\end{enumerate}
\end{Proposition}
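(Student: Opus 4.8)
The plan is to realize $\sigma^{(a,b)}_{\Omega_4}$ as the \emph{merge} of the two annular extreme points $\sigma^a_{AB}$ and $\sigma^b_{BC}$ along the shared buffer $B$, and to read uniqueness off the rigidity of quantum Markov chains. First I would check that the two prescribed marginals are mutually consistent on the overlap. Since $B$ encloses neither hole it carries trivial (disk) topology, so Assumption \ref{as:1} forces $\Sigma(B)=\{\sigma^1_B\}$; as $\sigma^a_{AB}\in\Sigma(AB)$ and $\sigma^b_{BC}\in\Sigma(BC)$ restrict to elements of $\Sigma(B)$, this gives $\textrm{tr}_A\,\sigma^a_{AB}=\sigma^1_B=\textrm{tr}_C\,\sigma^b_{BC}$. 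Thus we have a well-posed merging problem on $B$, with the $AB$- and $BC$-marginals glued along a common $\sigma^1_B$.

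Next I would invoke the equality case of SSA: a state with $I(A:C\vert B)=0$ is a short quantum Markov chain and is recovered from its $AB$-marginal by the Petz channel $\mathcal{R}_{B\to BC}$ built from $\sigma^b_{BC}$ and $\sigma^1_B$. Applying $\mathcal{R}_{B\to BC}$ to $\sigma^a_{AB}$ produces a density matrix $\sigma^{(a,b)}_{ABC}$ that by construction satisfies $\textrm{tr}_C\,\sigma^{(a,b)}_{ABC}=\sigma^a_{AB}$, $\textrm{tr}_A\,\sigma^{(a,b)}_{ABC}=\sigma^b_{BC}$, and $I(A:C\vert B)=0$. This single construction delivers both the correct restrictions (property 1) and the saturation of SSA (property 2). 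It also settles uniqueness at once: any element of $\Sigma(\Omega_4)$ meeting 1 and 2 is a Markov chain with the prescribed $AB$- and $BC$-marginals, and such a chain is completely fixed by those two marginals, so at most one candidate can exist.

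The substantive step, and the one I expect to be the main obstacle, is showing that the recovered state genuinely lies in $\Sigma(\Omega_4)$ rather than being merely some density matrix on $\Omega_4$. Here I would argue membership through local consistency combined with Assumption \ref{as:3}. Because $A$ and $C$ are separated by the thick buffer $B$, every disk at the scale $\epsilon$ sits inside $AB$ or inside $BC$; on such a disk the reduction of $\sigma^{(a,b)}_{ABC}$ agrees with a reduction of $\sigma^a_{AB}$ or $\sigma^b_{BC}$ and is hence locally indistinguishable from the ground state. Since moreover its annular reductions onto $\Omega_1,\Omega'_1$ detect the charges $a,b$, Assumption \ref{as:3} should place it inside $\mathrm{conv}\big(\bigcup_c \Sigma^c_{ab}(\Omega_4)\big)$, identified concretely with the channelwise maximally mixed mixture on $\bigoplus_c\mathbb{V}^c_{ab}$ (reflecting that $B$ stores no coherence linking the two holes). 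Making this airtight requires knowing that merging along $B$ does not inject any detectable excitation on the enlarged region from which information-convex elements are drawn, and this is exactly where the ground-state Markov conditions \ref{as:G1}, \ref{as:G2}, \ref{as:G3} must be brought in to guarantee that the recovery preserves membership in the information convex; the identification with the maximally mixed fusion state is then what later feeds Eq.~(\ref{Eq_Entanglement_abc}) to fix $f(a)=\ln d_a$.
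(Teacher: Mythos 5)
Your uniqueness argument and your consistency check on the overlap (that $B$ is a disk, so \ref{as:1} forces $\textrm{tr}_A\,\sigma^a_{AB}=\sigma^1_B=\textrm{tr}_C\,\sigma^b_{BC}$) are both fine, and the uniqueness step coincides with the paper's. But the existence step contains a genuine gap: it is simply not true that applying the Petz channel $\mathcal{R}_{B\to BC}$ built from $\sigma^b_{BC}$ to $\sigma^a_{AB}$ ``by construction'' yields a state with $\textrm{tr}_C\,\sigma^{(a,b)}_{ABC}=\sigma^a_{AB}$, $\textrm{tr}_A\,\sigma^{(a,b)}_{ABC}=\sigma^b_{BC}$ and $I(A:C\vert B)=0$. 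Only the $BC$-marginal comes for free. Marginal consistency on $B$ is necessary but far from sufficient for a quantum Markov extension to exist: take $\sigma_{AB}$ and $\sigma_{BC}$ both maximally entangled Bell pairs, whose $B$-marginals are consistently maximally mixed; monogamy forbids any common extension at all, and the Petz channel then outputs $\frac{1}{2}I_A\otimes\vert\Phi^+\rangle\langle\Phi^+\vert_{BC}$, whose $AB$-marginal is wrong. The Petz map \emph{recovers} a state that is already known to be a Markov chain; it does not manufacture a Markov chain from a pair of marginals. So your construction presupposes exactly what Proposition \ref{as:4} asserts --- that these particular marginals admit a Markov extension lying in $\Sigma(\Omega_4)$ --- and that is precisely where physical input is required.

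The paper supplies that input by building the global state physically rather than by merging marginals: by \ref{as:G1} the ground state obeys $I(AA':CC'\vert B)_{\sigma^1}=0$ for the enlarged partition of Fig.~\ref{44_Proof}(a); the unitary string operators $U^{(a,\bar{a})}$ and $U^{(b,\bar{b})}$, which exist by \ref{as:S1} and are supported in $AA'$ and $CC'$, leave this conditional mutual information unchanged by Eq.~(\ref{Equality_Unitary}); and tracing out $A'C'$ cannot increase it, by Eq.~(\ref{Equality_trace}). This yields a state on $\Omega_4$ with $I(A:C\vert B)=0$, whose membership in $\Sigma(\Omega_4)$ is immediate because before the partial trace there are no excitations in or around $ABC$, and whose marginals are identified as $\sigma^a_{AB}$ and $\sigma^b_{BC}$ via \ref{as:2}. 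Note also two smaller points: the membership step you defer (``the main obstacle'') cannot be repaired by invoking \ref{as:3}, which in the paper's logic is used \emph{downstream} of Proposition \ref{as:4} to derive $f(a)=\ln d_a$, not to prove it --- the proof needs only \ref{as:G1}, \ref{as:S1}, \ref{as:2} and SSA; and the element is not a priori the ``channelwise maximally mixed mixture'' --- the relative weights $P_{(a\times b\to c)}$ across fusion channels are a nontrivial output of the later entropy-maximization argument, not an input to this proposition.
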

The proof of Proposition~\ref{as:4} will be presented in Sec.~\ref{Section_Info}. For now, we simply point out that the proof needs \ref{as:G1}, \ref{as:S1}, \ref{as:2} and SSA.

\subsection{ The derivation of the $\ln d_a$ contribution to von Neumann entropy}\label{Sec_Bulk_da}

We present a derivation of the topological contribution to the von Neumann entropy on annulus $\Omega_1$ by showing $f(a)=\ln d_a$, see Eq.~(\ref{eq:d_a}), and the probability in Eq.~(\ref{eq:Bulk Fusion Prob}) from Assumptions \ref{as:G1}, \ref{as:S1}, \ref{as:1}, \ref{as:2} and \ref{as:3}.

First, we apply Proposition~\ref{as:4} to the $(a,b)$ sector and the vacuum  sector $(1,1)$, to express the von Neumann entropy $S(\sigma^{(a,b)}_{\Omega_4})$ and $S(\sigma^1_{\Omega_4})$ in terms of entropies on simpler subsystems $AB$, $BC$ and $B$ in Fig.~\ref{Bulk_Restriction}(b).  We then use  \ref{as:2} and \ref{as:1} to compare the von Neumann entropy in these simpler subsystems and find
\begin{equation}
S(\sigma^{(a,b)}_{\Omega_4})=S(\sigma^{1}_{\Omega_4}) + 2f(a) + 2f(b). \label{star_2}
\end{equation}
Note that $\sigma^{(1,1)}_{\Omega_4}= \sigma^1_{\Omega_4}$ follows from \ref{as:3} and proposition~\ref{as:4}.

Next, we note that  $\sigma^{(a,b)}_{\Omega_4}$ defined in Proposition~\ref{as:4} must be an element with maximal entropy among the elements of $\Sigma(\Omega_4)$ which have topological charge $a$ and $b$ on the two entanglement cuts surrounded by $\Omega_1$ and $\Omega'_1$ in Fig.~\ref{Bulk_Restriction}(a). This follows from SSA,  \ref{as:1}, and \ref{as:2}.
 With \ref{as:3}, we express $\sigma^{(a,b)}_{\Omega_4}$ as a convex combination of elements in $\bigcup_{c}\Sigma_{ab}^c(\Omega_4)$, and calculate its entropy with the help of \ref{as:2}. By maximizing the von Neumann entropy, we find
\begin{equation}
S(\sigma_{\Omega_4}^{(a,b)}) = S(\sigma^1_{\Omega_4}) +f(a) + f(b) + \ln (\sum_c N_{ab}^c\, e^{f(c)}), \label{star_1}
\end{equation}
and that
\begin{equation}
\sigma^{(a,b)}_{\Omega_4} = \sum_{c} P_{(a\times b\to c)} \sigma_{\Omega_4}^{{(a,b,c)}_{\textrm{max}} }, \label{eq:sigma(ab)}
\end{equation}
where $\sigma_{\Omega_4}^{(a,b,c)_{\textrm{max}}}$ is the maximal-entropy element of $\Sigma^{c}_{ab}(\Omega_4)$ and
$P_{(a\times b\to c)} =\frac{N_{ab}^{c}e^{f(c)}}{\sum_d N_{ab}^{d}e^{f(d)}}$. 

In more detail, it follows from \ref{as:2} and \ref{as:3} that
$\sigma_{\Omega_4}^{(a,b,c)_x}\cdot \sigma_{\Omega_4}^{(a',b',c')_y}=0$  if $(a,b,c)\ne (a',b',c')$, where $\sigma_{\Omega_4}^{(a,b,c)_x}\in\Sigma_{ab}^c(\Omega_4)$ and $\sigma_{\Omega_4}^{(a',b',c')_y}\in \Sigma_{a'b'}^{c'}(\Omega_4)$.
This is because of \ref{as:2} and the general result that $\rho_{AB}\cdot\sigma_{AB}=0$ if $\rho_{A}\cdot \sigma_A=0$. (See Appendix \ref{Appendix_Monotonicity} for a proof.) 
Moreover, $S(\sigma_{\Omega_4}^{(a,b,c)_{x}}) \le S(\sigma^1_{\Omega_4}) +f(a) +f(b) + f(c) +\ln N_{ab}^c$ for any $\sigma_{\Omega_4}^{(a,b,c)_{x}}\in \Sigma_{ab}^c(\Omega_4)$. The unique element that saturates the bound is $\sigma_{\Omega_4}^{(a,b,c)_{\textrm{max}}}$.
In the calculation of $S(\sigma_{\Omega_4}^{(a,b)})$ in Eq.~(\ref{star_1}), we have also used the well-known result
$S(\sum_{i} p_i \,\rho^i)=\sum_{i} p_i (S(\rho^i)-\ln p_i)$ 
if  $\rho^i\cdot \rho^j=0,\,\,\forall i\ne j$, and $\{p_i \}$ is a probability distribution.

We have obtained two expressions of $S(\sigma_{\Omega_4}^{(a,b)})$ in Eq.~(\ref{star_2}) and  Eq.~(\ref{star_1}). By comparing them, one finds 
\begin{equation}
e^{f(a)} e^{f(b)}= \sum_c N_{ab}^c e^{f(c)}. \label{eq:e^f(a)}
\end{equation}
Because $f(a)$ is real, we must have $e^{f(a)}\in (0,+\infty)$. 
 Equation~(\ref{eq:e^f(a)}) has a unique solution $e^{f(a)}=d_a$, where $d_a$ is the quantum dimension uniquely defined given $N_{ab}^c$ \cite{2006AnPhy.321....2K}. See also Appendix \ref{Appendix_Fusion} for a self-contained proof using a few assumptions about the fusion multiplicities. Thus,
\begin{equation}
f(a)=\ln d_a \label{eq:d_a}.
\end{equation}

To summarize, we have derived the $\ln d_a$ topological contribution of von Neumann entropy as the only value consistent with SSA given Assumptions \ref{as:G1}, \ref{as:S1}, \ref{as:1}, \ref{as:2} and \ref{as:3}. We have also derived the explicit form of density matrix  $\sigma^{(a,b)}_{\Omega_4}$ in Eq.~(\ref{eq:sigma(ab)}) with probability
\begin{equation}
P_{(a\times b\to c)} =\frac{N_{ab}^c d_c}{d_a d_b}. \label{eq:Bulk Fusion Prob}
\end{equation} 
Later  we will identify the physical meaning of probability $P_{(a\times b\to c)}$; see Sec.~\ref{Sec_Fusion probability}.

\section{The gapped boundary of a 2D non-chiral topological order} \label{Section:boundary}
Two-dimensional nonchiral topological orders may have gapped boundaries \cite{1998quant.ph.11052B,2011CMaPh.306..663B,2012CMaPh.313..351K,2012arXiv1211.4644K,Levin2013,2015PhRvL.114g6402L,2015PhRvL.114g6401H,
	2018JHEP...01..134H,2017PhRvB..96p5138B,2017CMaPh.355..645C}. A bulk phase may have more than one gapped boundary type.
 For each boundary type, there are several \emph{boundary superselection sectors} \cite{2012CMaPh.313..351K,2012arXiv1211.4644K,2019PhRvB..99c5112S}, i.e., the types of deconfined \emph{boundary topological excitations}. We denote the boundary superselection sectors using $\{ 1, \alpha,\beta,\gamma, \cdots \}$, where $1$ is the boundary vacuum. 
 Note that the boundary superselection sectors are in general different from bulk superselection sectors, and for non-Abelian models, they usually cannot be identified as a subset of bulk superselection sectors.  Boundary topological excitations can fuse, and the fusion rule can be written as
\begin{equation}
\alpha\times \beta = \sum_{\gamma} {N}_{\alpha\beta}^{\gamma} \, \gamma. \label{eq:boundary fusion rule}
\end{equation}
Here $\{N_{\alpha\beta}^{\gamma}\}$ is the set of fusion multiplicities of boundary topological excitations. They are non-negative integers satisfying the set of conditions in Appendix~\ref{Appendix_Fusion}. Note, however, that unlike the fusion multiplicities of anyons we have   ${N}_{\alpha\beta}^{\gamma}\ne {N}_{\beta \alpha}^{\gamma}$ for a most generic boundary theory. An example is the $K=\{1\}$ boundary of a quantum double model with a finite group $G$ \cite{2011CMaPh.306..663B,2019PhRvB..99c5112S}. In this case, each boundary superselection sector  can be identified as a group element of $G$, and the fusion rule is identical to the group multiplication. Therefore, the fusion is not commutative for a non-Abelian $G$.

The goal of this section is to provide a derivation of the $\ln d_{\alpha}$ universal topological contribution of von Neumann entropy given the fusion multiplicities $\{N_{\alpha\beta}^{\gamma} \}$. In particular, we will show that the $\ln d_{\alpha}$ contribution emerges as the only value consistent with SSA and a set of assumptions about the information convex. The method is parallel to the one discussed in Sec.~\ref{Sec_Bulk}.

\subsection{Preparing for the derivation}

In this subsection, we state and explain the key Assumption \ref{as:3'} which describes the way ${N}_{\alpha\beta}^{\gamma}$ is encoded in $\Sigma(\Omega_5)$. We further state proposition~\ref{as:4'} about the existence of a certain element which saturates SSA. The proof of proposition~\ref{as:4'} is provided in Sec.~\ref{Section_Info}. Both of them are crucial in the derivation of the $\ln d_{\alpha}$ contribution of the von Neumann entropy.

\begin{figure}[h]
	\centering
	\includegraphics[scale=1.0]{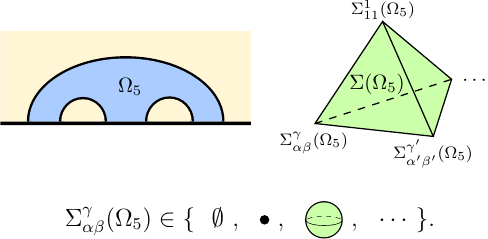}
	\caption{Subsystem $\Omega_5$ which touches a gapped boundary and its information convex $\Sigma(\Omega_5)$.	}\label{Boundary_Subsystems}
\end{figure}

\begin{figure}[h]
	\centering
	\includegraphics[scale=1.0]{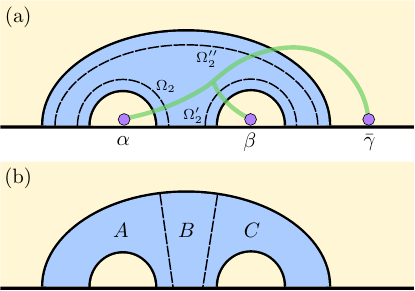}
	\caption{(a) Subsystems $\Omega_2, \Omega'_2, \Omega''_2 \subseteq \Omega_5$ surrounding the three entanglement cuts can detect the  boundary superselection sectors. In this picture, we have $\sigma_{\Omega_2}^{\alpha}$, $\sigma^{\beta}_{\Omega'_2}$, and $\sigma^{\gamma}_{\Omega''_2}$ on these subsystems. (b) $\Omega_5$ is divided into smaller pieces, $\Omega_5=ABC$. Both $AB$ and $BC$ have $\Omega_2$ topology. }
	\label{Boundary_ABC}
\end{figure}

\begin{assumption}{\Omega_5}{} \label{as:3'}
	For a subsystem $\Omega_5$ shown in Fig.~\ref{Boundary_Subsystems}, i.e., a connected subset of a half plane which has three entanglement cuts touching the same boundary, the information convex $\Sigma(\Omega_5)$ has the following structure:
	
	The set of extreme points of $\Sigma(\Omega_5)$ forms a set $\mathcal{M}=\bigcup_{(\alpha,\beta,\gamma)}\mathcal{M}^{\gamma}_{\alpha\beta}(\Omega_5)$. Each $\mathcal{M}^{\gamma}_{\alpha\beta}(\Omega_5)$ with ${N}_{\alpha\beta}^{\gamma} \ne 0$ is a connected component of $\mathcal{M}$. Let $\Sigma^{\gamma}_{\alpha\beta}(\Omega_5)$ be the convex subset of $\Sigma(\Omega_5)$  formed by the convex combination of elements in $\mathcal{M}^{\gamma}_{\alpha\beta}(\Omega_5)$. 
	
	Taking a partial trace to reduce $\sigma_{\Omega_5}^{(\alpha,\beta,\gamma)_x}\in \Sigma^{\gamma}_{\alpha\beta}(\Omega_5) $ to certain subsystems of topology $\Omega_2$ surrounding each entanglement cut, i.e., the $\Omega_2$, $\Omega'_2$ and $\Omega''_2$ shown in Fig.~\ref{Boundary_ABC}(a), we get extreme points  $\sigma^{\alpha}_{\Omega_2}$, $\sigma^{\beta}_{\Omega'_2}$ and $\sigma^{\gamma}_{\Omega''_2}$ respectively.
	
	Density matrix $\sigma_{\Omega_5}^{(\alpha,\beta,\gamma)_x}\in \Sigma^{\gamma}_{\alpha\beta}(\Omega_5)$ has one-to-one correspondence with $\hat{\rho}^{(\alpha,\beta,\gamma)_x}$, a  density matrix on a Hilbert space $\mathbb{V}_{\alpha\beta}^{\gamma} \equiv span \{ \vert i\rangle\}_{i=1}^{{N}_{\alpha\beta}^{\gamma}} $. Extreme points of $\Sigma^{\gamma}_{\alpha\beta}(\Omega_5)$ correspond to the pure-state density matrices on $\mathbb{V}_{\alpha\beta}^{\gamma} $.  The mapping preserves the convex structure, i.e.,
	\begin{equation}
	p\,\sigma_{\Omega_5}^{(\alpha,\beta,\gamma)_x} + (1-p) \sigma_{\Omega_5}^{(\alpha,\beta,\gamma)_y} =\sigma_{\Omega_5}^{(\alpha,\beta,\gamma)_z}  \nonumber
	\end{equation}
	if and only if 
	\begin{equation}
	p\,\hat{\rho}^{(\alpha,\beta,\gamma)_x} + (1-p) \hat{\rho}^{(\alpha,\beta,\gamma)_y} =\hat{\rho}^{(\alpha,\beta,\gamma)_z}, \nonumber
	\end{equation}
	where $p\in [0,1]$.
	Furthermore, $\sigma_{\Omega_5}^{(\alpha,\beta,\gamma)_x}$ has von Neumann entropy: 
	\begin{equation}
	\begin{aligned}
	S(\sigma_{\Omega_5}^{(\alpha,\beta,\gamma)_x}) =& S(\sigma_{\Omega_5}^{1}) + F(\alpha) + F(\beta) + F(\gamma)\\
	&+ S(\hat{\rho}^{(\alpha,\beta,\gamma)_x}).
	\end{aligned}
	\end{equation}
	Here $\sigma^1_{\Omega_5}\equiv \textrm{tr}_{\bar{\Omega}_5}\, \sigma^1$ is the unique element of $\Sigma^{1}_{11}(\Omega_5)$.
\end{assumption}

\begin{Proposition}  \label{as:4'}
	There exists a unique element $\sigma^{(\alpha,\beta)}_{\Omega_5} \in \Sigma(\Omega_5)$ such that for the partition $\Omega_5=ABC$ shown in Fig.~\ref{Boundary_ABC}(b) the following hold:
	\begin{enumerate}
		\item It prepares the extreme point $\sigma_{AB}^{\alpha}$ when restricted to $AB$
		and   it prepares the extreme point $\sigma_{BC}^{\beta}$ when restricted to $BC$. 
		\item It saturates the conditional mutual information:
		\begin{equation}
		I(A:C\vert B)_{\sigma^{(\alpha,\beta)}}=0.\nonumber
		\end{equation}
	\end{enumerate}
\end{Proposition}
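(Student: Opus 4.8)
The plan is to mirror the strategy for the bulk Proposition~\ref{as:4}, replacing each bulk ingredient by its boundary counterpart: \ref{as:G1} by \ref{as:G2}, \ref{as:S1} by \ref{as:S2}, and \ref{as:2} by \ref{as:2'}, with SSA playing the same role. The statement splits into an existence claim and a uniqueness claim, and I would handle them separately.

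For existence I would exhibit a candidate state built from boundary string operators. By \ref{as:S2} there are unitary boundary strings $U^{(\alpha,\bar\alpha)}$ and $U^{(\beta,\bar\beta)}$; using the deformation freedom I would route them so that the $\alpha$ end threads the cut enclosed by the half-annulus $AB$ and the $\beta$ end threads the cut enclosed by $BC$, pushing the antiparticles outside $\Omega_5$, and I would pick representatives supported on $AB$ (plus exterior) and on $BC$ (plus exterior) respectively. Setting $\sigma^{(\alpha,\beta)}_{\Omega_5}=\mathrm{tr}_{\bar\Omega_5}\,U^{(\beta,\bar\beta)}U^{(\alpha,\bar\alpha)}\sigma^1 U^{(\alpha,\bar\alpha)\dagger}U^{(\beta,\bar\beta)\dagger}$, all excitations sit outside $\Omega_5$, so by the defining property of the information convex this state lies in $\Sigma(\Omega_5)$. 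Its restriction to $AB$ is an element of $\Sigma(\Omega_2)$ carrying definite boundary charge $\alpha$ (only the $\alpha$-string crosses that cut), hence by \ref{as:2'} it is the extreme point $\sigma^{\alpha}_{AB}$; likewise the restriction to $BC$ is $\sigma^{\beta}_{BC}$, giving property~1.

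For property~2 I would use the invariance of $I(A:C\vert B)$ under unitaries supported on $AB$ and under unitaries supported on $BC$. Since $U^{(\alpha,\bar\alpha)}$ was placed on $AB$ and $U^{(\beta,\bar\beta)}$ on $BC$, conjugating the ground state successively by these two unitaries leaves $I(A:C\vert B)$ unchanged, so $I(A:C\vert B)_{\sigma^{(\alpha,\beta)}}=I(A:C\vert B)_{\sigma^1}$. The partition $\Omega_5=ABC$ of Fig.~\ref{Boundary_ABC}(b) is topologically of the type in \ref{as:G2}, whence $I(A:C\vert B)_{\sigma^1}=0$ and the constructed state saturates SSA.

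For uniqueness I would invoke the structure of states saturating strong subadditivity: $I(A:C\vert B)_\tau=0$ forces $\tau$ to be a quantum Markov chain, hence to be reconstructible from the marginals $\tau_{AB}$ and $\tau_{BC}$ through a Petz recovery map on $B$. Here $B$ has trivial (disk) topology, so $\tau_B=\sigma^1_B$ is fixed by \ref{as:1'} and the two marginals are automatically consistent on $B$. Any element satisfying properties~1 and~2 thus has marginals $\sigma^{\alpha}_{AB}$, $\sigma^{\beta}_{BC}$ and identical recovery data, and therefore coincides with $\sigma^{(\alpha,\beta)}_{\Omega_5}$. The main obstacle I anticipate is the existence half rather than uniqueness: one must verify that the two boundary strings can simultaneously be deformed into the disjoint supports $AB$ and $BC$ while still imprinting the correct charges on the two cuts, and that the induced marginals are genuinely the claimed extreme points. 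This is precisely the step where the deformation hypothesis behind \ref{as:S2}, the local indistinguishability built into the information convex, and \ref{as:2'} must be combined with care.
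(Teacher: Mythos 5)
Your overall strategy (explicit construction by unitary boundary strings, then uniqueness via the Petz/quantum-Markov-state reconstruction) is the same as the paper's, and your uniqueness half is correct. But the existence half contains a genuine gap: you claim that $I(A:C\vert B)$ is invariant under unitaries supported on $AB$ and under unitaries supported on $BC$. That is false. Conditional mutual information is only invariant under \emph{product} unitaries $U_A U_B U_C$ acting on the individual slots, which is exactly what Eq.~(\ref{Equality_Unitary}) states. A unitary acting jointly on $A$ and $B$ can change $I(A:C\vert B)$: for example, for a Bell pair shared between $B$ and $C$ with $A$ in a pure state, a SWAP on $AB$ raises $I(A:C\vert B)$ from $0$ to $2\ln 2$. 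So placing the $\alpha$-string on ``$AB$ plus exterior'' and the $\beta$-string on ``$BC$ plus exterior'' (note these two supports are \emph{not} disjoint --- they overlap on $B$, contrary to what you assert at the end) leaves you with no way to conclude property~2: applying \ref{as:G2} to the partition $ABC$ of $\Omega_5$ and then conjugating by such strings does not preserve the vanishing of the conditional mutual information.

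The paper's proof avoids this by a more careful choice of the string supports and of the partition to which \ref{as:G2} is applied. Write $A'$ for the union of the hole enclosed by $AB$ and a piece of the exterior adjacent to $A$, and $C'$ for the analogous region on the other side, as in Fig.~\ref{44_Proof}(b). Route $U^{(\alpha,\bar{\alpha})}$ entirely inside $AA'$ (so it crosses only $A$, with $\alpha$ in the hole and $\bar{\alpha}$ in the exterior) and $U^{(\beta,\bar{\beta})}$ entirely inside $CC'$; these supports are disjoint from each other and from $B$. Now apply \ref{as:G2} to the coarse partition $AA'\vert B\vert CC'$, giving $I(AA':CC'\vert B)_{\sigma^1}=0$; the two strings are product unitaries with respect to these three slots, so Eq.~(\ref{Equality_Unitary}) applies; finally, tracing out $A'$ and $C'$ preserves the Markov property by Eq.~(\ref{Equality_trace}), yielding $I(A:C\vert B)_{\sigma^{(\alpha,\beta)}}=0$ on $\Omega_5=ABC$. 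With this routing, your remaining steps --- membership in $\Sigma(\Omega_5)$ because all excitations lie in the holes and the far exterior, charge identification via \ref{as:2'}, and uniqueness via the Markov reconstruction from the marginals $\sigma^{\alpha}_{AB}$ and $\sigma^{\beta}_{BC}$ --- go through as you wrote them.
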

The proof of proposition~\ref{as:4'} will be presented in Sec.~\ref{Section_Info}. For now, we simply point out that the proof needs \ref{as:G2}, \ref{as:S2}, \ref{as:2'} and SSA.

\subsection{ The derivation of the $\ln d_{\alpha}$  contribution to von Neumann entropy}

We present a derivation of the topological contribution to the von Neumann entropy on subsystem $\Omega_2$ by showing $F(\alpha)=\ln d_{\alpha}$, see Eq.~(\ref{eq:d_alpha}), and the probability in Eq.~(\ref{eq:boundary fusion prob}) from Assumptions \ref{as:G2}, \ref{as:S2}, \ref{as:1'}, \ref{as:2'} and \ref{as:3'}. The derivation is essentially the same with that in Sec.~\ref{Sec_Bulk_da}, and  we save our explanations in many places.

First, we apply Proposition~\ref{as:4'} to the $(\alpha,\beta)$ sector and the vacuum  sector $(1,1)$, to express the von Neumann entropy $S(\sigma^{(\alpha,\beta)}_{\Omega_5})$ and $S(\sigma^1_{\Omega_5})$ in terms of entropies on simpler subsystems $AB$, $BC$ and $B$ in Fig.~\ref{Boundary_ABC}(b).  We then use  \ref{as:2'} and \ref{as:1'} to compare the von Neumann entropy in these simpler subsystems and find
\begin{equation}
S(\sigma^{(\alpha,\beta)}_{\Omega_5})=S(\sigma^{1}_{\Omega_5}) + 2F(\alpha) + 2F(\beta). \label{star_2'}
\end{equation}
Note that $\sigma^{(1,1)}_{\Omega_5}= \sigma^1_{\Omega_5}$ follows from \ref{as:3'} and Proposition~\ref{as:4'}.

Next, we notice that  $\sigma^{(\alpha,\beta)}_{\Omega_5}$ defined in Proposition~\ref{as:4'} must be an element with maximal entropy among the elements of $\Sigma(\Omega_5)$ which have topological charge $\alpha$ and $\beta$ on the two entanglement cuts surrounded by $\Omega_2$ and $\Omega'_2$ in Fig.~\ref{Boundary_ABC}(a). This follows from SSA,  \ref{as:1'}, and \ref{as:2'}.
 With \ref{as:3'}, we express $\sigma^{(\alpha,\beta)}_{\Omega_5}$ as a convex combination of elements in $\bigcup_{\gamma}\Sigma_{\alpha\beta}^{\gamma}(\Omega_5)$, and calculate its entropy with the help of \ref{as:2'}. By maximizing the von Neumann entropy, we find
\begin{equation}
S(\sigma_{\Omega_5}^{(\alpha,\beta)})= S(\sigma^1_{\Omega_5}) +F(\alpha) + F(\beta) + \ln (\sum_{\gamma} N_{\alpha\beta}^{\gamma}\, e^{F(\gamma)}) \label{star_1'}
\end{equation}
and 
\begin{equation}
\sigma^{(\alpha,\beta)}_{\Omega_5}= \sum_{\gamma} P_{(\alpha\times\beta\to\gamma)} \sigma_{\Omega_5}^{{(\alpha,\beta,\gamma)}_{\textrm{max}} }, \label{eq:sigma(ab)'}
\end{equation}
where $\sigma_{\Omega_5}^{(\alpha,\beta,\gamma)_{\textrm{max}}}$ is the maximal-entropy element of $\Sigma^{\gamma}_{\alpha\beta}(\Omega_5)$ and
$P_{(\alpha\times\beta\to\gamma)}=\frac{ N_{\alpha\beta}^{\gamma}e^{F(\gamma)} }{ \sum_{\delta} N_{\alpha\beta}^{\delta}e^{F(\delta)} }$.

We have obtained two expressions of $S(\sigma_{\Omega_5}^{(\alpha,\beta)})$ in Eq.~(\ref{star_2'}) and  Eq.~(\ref{star_1'}). By comparing them, one finds 
\begin{equation}
e^{F(\alpha)} e^{F(\beta)}= \sum_{\gamma} N_{\alpha\beta}^{\gamma} e^{F(\gamma)}. \label{eq:e^F(alpha)}
\end{equation}
Because $F(\alpha)$ is real, we must have $e^{F(a)}\in (0,+\infty)$. 
Equation~(\ref{eq:e^F(alpha)}) has a unique solution $e^{F(\alpha)}=d_{\alpha}$, where $d_{\alpha}$ is the quantum dimension uniquely defined given $N_{\alpha\beta}^{\gamma}$. See Appendix \ref{Appendix_Fusion} for a self-contained proof using a few assumptions about the fusion multiplicities. Thus,
\begin{equation}
F(\alpha)=\ln d_{\alpha}. \label{eq:d_alpha}
\end{equation}

To summarize, we have derived the $\ln d_{\alpha}$ topological contribution of von Neumann entropy as the only value consistent with SSA given Assumptions \ref{as:G2}, \ref{as:S2}, \ref{as:1'}, \ref{as:2'} and \ref{as:3'}. We have also derived the explicit form of density matrix  $\sigma^{(\alpha,\beta)}_{\Omega_5}$ in Eq.~(\ref{eq:sigma(ab)'}) with probability
\begin{equation}
P_{(\alpha\times\beta\to\gamma)}=\frac{N_{\alpha \beta}^{\gamma} d_{\gamma}}{d_{\alpha} d_{\beta}}. \label{eq:boundary fusion prob}
\end{equation}
Later  we will identify the physical meaning of probability $P_{(\alpha\times \beta\to \gamma)}$; see Sec.~\ref{Sec_Fusion probability}.

\section{Condensation from the bulk to a gapped boundary}\label{Section:condensation}
We know from the previous discussion that for a nonchiral topological order with a gapped boundary there is a set of bulk superselection sectors $\{1,a,b,\cdots \}$ and a set of boundary superselection sectors $\{ 1,\alpha,\beta,\cdots \}$. It is possible to move an anyon $a$ onto a gapped boundary. After measurement of the boundary superselection sector, it will turn into a certain boundary topological excitation type. 
One could formally write down this process as
\begin{equation}
a=\sum_{\alpha} N^{\alpha}_{a} \alpha.\label{eq:condensation_rule}
\end{equation}
This process is similar to fusion [Eq. (\ref{eq:bulk fusion rule}) and Eq.~(\ref{eq:boundary fusion rule})] but there is only one excitation on the left-hand side. We will call this process \emph{bulk-to-boundary condensation} \footnote{The terminology is adapted from Ref.~\cite{2017CMaPh.355..645C}, the algebraic result therein is the same with ours. However, the physical context is different since the boundary excitations in \cite{2017CMaPh.355..645C} are confined.} and we will frequently call it \emph{condensation} for short. We call Eq.~(\ref{eq:condensation_rule}) the condensation rule. Here $\{N^{\alpha}_{a}\}$ is a set non-negative integers which we call the condensation multiplicities. We say boundary topological excitation $\alpha$ is in the condensation channel of anyon $a$ if $N^{\alpha}_{a}\ge 1$.

Because the terminology \emph{bulk-to-boundary condensation} is not as standard as \emph{fusion}, we provide some additional explanations and point out alternative terminologies in the literature.  In Ref.~\cite{2011CMaPh.306..663B}, an anyon is said to condense onto a gapped boundary if it can disappear on a gapped boundary \footnote{It is called ``annihilating" particles at a gapped boundary in Ref.~\cite{Levin2013}.}. This corresponds to the condensation to the boundary vacuum  $1$ in our terminology. Therefore, the condensation rule in our definition is a generalization which includes a generic boundary superselection sector. Our bulk-to-boundary condensation has the same meaning as the \emph{bulk-to-boundary map} or ``the bulk excitations fuse into the boundary" in \cite{2012CMaPh.313..351K,2014NuPhB.886..436K}. Also, it is worth noting that bulk-to-boundary condensation is physically different from anyon condensation \cite{2009PhRvB..79d5316B,2009PhRvL.102v0403B,2014NuPhB.886..436K,2014PhRvB..90s5130E}, which is a relation between two topologically ordered phases of the bulk.

The goal of this section is to provide a derivation of the formula $d_{a}=\sum_{\alpha} N_{a}^{\alpha} d_{\alpha}$. It emerges from the consistency with SSA and a set of assumptions about the information convex. The method is parallel to the one discussed in Sec.~\ref{Sec_Bulk}.

\begin{figure}[h]
	\centering
	\includegraphics[scale=1.0]{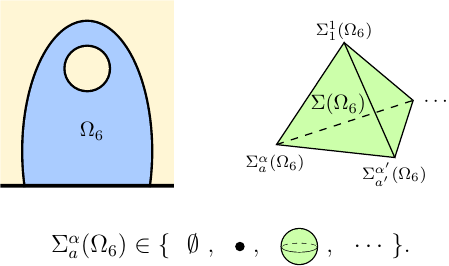}
	\caption{Subsystem $\Omega_6$ which touches a gapped boundary and its information convex $\Sigma(\Omega_6)$.}	\label{Bulk_Boundary_Subsystems}
\end{figure}

\subsection{Preparing for the derivation}

In this subsection, we state and explain the key assumption \ref{as:1''} which describes the way ${N}_{a}^{\alpha}$ is encoded in $\Sigma(\Omega_6)$. We further state Proposition~\ref{as:2''} about the existence of a certain element which saturates SSA. The proof of Proposition~\ref{as:2''} is provided in Sec.~\ref{Section_Info}. Both of them are crucial in the derivation of the formula $d_{a}=\sum_{\alpha} N_{a}^{\alpha} d_{\alpha}$.

\begin{assumption}{\Omega_6}{}  \label{as:1''}
	Let $\Omega_6$ be a connected subsystem with two entanglement cuts, one in the bulk and another touching the gapped boundary; see Fig.~\ref{Bulk_Boundary_Subsystems}. Its information convex $\Sigma(\Omega_6)$ has the following structure:
	
	The set of extreme points of $\Sigma(\Omega_6)$ forms a set $\mathcal{M}=\bigcup_{(a,\alpha)}\mathcal{M}_{a}^{\alpha}(\Omega_6)$. Each $\mathcal{M}_{a}^{\alpha}(\Omega_6)$ with $N_{a}^{\alpha} \ne 0$ is a connected component of $\mathcal{M}$. Let $\Sigma^{\alpha}_{a}(\Omega_6)$ be the convex subset of $\Sigma(\Omega_6)$  formed by a convex combination of elements in $\mathcal{M}_{a}^{\alpha}(\Omega_6)$. 
	
	Taking a partial trace to reduce  $\sigma_{\Omega_6}^{(a,\alpha)_x}\in \Sigma^{\alpha}_{a}(\Omega_6) $ and  $\Omega_1$ or $\Omega_2$ surrounding the entanglement cuts, see Fig.~\ref{Bulk_Boundary_ABC}(a), we get the extreme point $\sigma^{a}_{\Omega_1} \in \Sigma(\Omega_1)$ and the extreme point $\sigma^{\alpha}_{\Omega_2} \in \Sigma(\Omega_2)$, respectively.

	Density matrix $\sigma_{\Omega_6}^{(a,\alpha)_x}\in \Sigma^{\alpha}_{a}(\Omega_6)$ has a one-to-one correspondence with $\hat{\rho}^{(a,\alpha)_x}$, a  density matrix on a Hilbert space $\mathbb{V}_{a}^{\alpha} \equiv span \{ \vert i\rangle\}_{i=1}^{N_{a}^{\alpha}} $. Extreme points of $\Sigma_{a}^{\alpha}(\Omega_6)$ correspond to the pure-state density matrices on $\mathbb{V}_{a}^{\alpha}$.  The mapping preserves the convex structure, i.e.,
	\begin{equation}
	p\,\sigma_{\Omega_6}^{(a,\alpha)_x} + (1-p) \sigma_{\Omega_6}^{(a,\alpha)_y} =\sigma_{\Omega_6}^{(a,\alpha)_z} \nonumber
	\end{equation}
	if and only if 
	\begin{equation}
	p\,\hat{\rho}^{(a,\alpha)_x} + (1-p) \hat{\rho}^{(a,\alpha)_y} =\hat{\rho}^{(a,\alpha)_z}, \nonumber
	\end{equation}
	where $p\in [0,1]$.
	Furthermore, $\sigma_{\Omega_6}^{(a,\alpha)_x}$ has von Neumann entropy: 
	\begin{equation}
	S(\sigma_{\Omega_6}^{(a,\alpha)_x}) = S(\sigma_{\Omega_6}^{1}) + f(a) + F(\alpha) 
	+ S(\hat{\rho}^{(a,\alpha)_x}).
	\end{equation}
	Here $\sigma^1_{\Omega_6}\equiv \textrm{tr}_{\bar{\Omega}_6}\,\sigma^1$ is the unique element of $\Sigma_{1}^{1}(\Omega_6)$. 
\end{assumption}

\begin{figure}[h]
	\centering
	\includegraphics[scale=1.0]{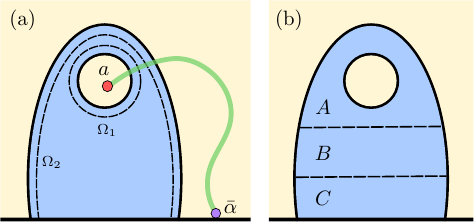}
	\caption{(a) Subsystems $\Omega_1,\Omega_2\subseteq \Omega_6$ surrounding the two entanglement cuts. For the state shown in the diagram, the topological charges can be read from the extreme points $\sigma^a_{\Omega_1}$ and $\sigma^{\alpha}_{\Omega_2}$. (b) $\Omega_6$ is divided into smaller pieces $\Omega_6=ABC$. $AB$ has $\Omega_1$ topology and $BC$ has $\omega_2$ topology.}
	\label{Bulk_Boundary_ABC}
\end{figure}

\begin{Proposition}  \label{as:2''}
	There exists a unique element $\sigma^{(a)}_{\Omega_6} \in \Sigma(\Omega_6)$  such that for the partition $\Omega_6=ABC$ shown in Fig.~\ref{Bulk_Boundary_ABC}(b) the following hold:
	\begin{enumerate}
		\item It prepares the extreme point $\sigma_{AB}^{a}$ when restricted to $AB$.

		\item It saturates the conditional mutual information:
		\begin{equation}
		I(A:C\vert B)_{\sigma^{(a)}}=0.\nonumber
		\end{equation}
	\end{enumerate}
\end{Proposition}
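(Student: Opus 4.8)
The plan is to mirror the (deferred) arguments for Propositions~\ref{as:4} and \ref{as:4'}, adapting them to the mixed bulk--boundary topology of $\Omega_6$. The central device will be a \emph{merging} construction: I would build $\sigma^{(a)}_{\Omega_6}$ by gluing the extreme point $\sigma^a_{AB}$ (on the bulk annulus $AB$, which has $\Omega_1$ topology) to the vacuum element $\sigma^1_{BC}$ (on the boundary disk $BC$, which has $\omega_2$ topology) along their shared region $B$, and then argue that the unique state consistent with these two marginals and with $I(A:C\vert B)=0$ is a genuine element of $\Sigma(\Omega_6)$. Note that, unlike in Propositions~\ref{as:4} and \ref{as:4'}, only a single marginal constraint (on $AB$) appears, because $\Sigma(BC)=\Sigma(\omega_2)$ is a single point by \ref{as:1'} and so the $BC$ marginal is automatically fixed.

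First I would verify the consistency condition needed for any merging, namely that the two candidate marginals agree on $B$. Since $B$ is topologically trivial, its information convex $\Sigma(B)$ contains a single element by \ref{as:1} (equivalently \ref{as:1'}); because restriction carries an information convex into that of the smaller subsystem, both $\sigma^a_{AB}$ and $\sigma^1_{BC}$ restrict on $B$ to this unique element, giving $\textrm{tr}_A\,\sigma^a_{AB}=\textrm{tr}_C\,\sigma^1_{BC}$. This is exactly the compatibility required to form a common extension.

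Next I would dispatch uniqueness, which is the easy half. Given overlapping marginals on $AB$ and $BC$ that agree on $B$, SSA guarantees at most one state with $I(A:C\vert B)=0$, and when it exists it is the unique Markov extension determined by those two marginals (recoverable from $\sigma_{AB}$ by a Petz map on $B$ built from the $BC$ data). Hence any $\sigma'\in\Sigma(\Omega_6)$ obeying conditions~1 and~2 must restrict on $BC$ to the single element of $\Sigma(BC)$ and on $AB$ to $\sigma^a_{AB}$, forcing $\sigma'$ to coincide with this extension. For existence I would realize the extension by a global state: apply the string operator of \ref{as:S1} (or the bulk-to-boundary string of \ref{as:S3}) to create anyon $a$ through the bulk cut while leaving the boundary cut unexcited, and reduce to $\Omega_6$; by \ref{as:2} the restriction to $AB$ is $\sigma^a_{AB}$, and the vanishing of $I(A:C\vert B)$ should be inherited from the ground-state identity \ref{as:G3} for the Fig.~\ref{ABC_Trivial}(c) topology, using the string-deformation freedom to keep the operator from spoiling the Markov structure across the $A$--$C$ cut.

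The hard part will be precisely this last step: certifying that the merged/Markov state is a bona fide element of $\Sigma(\Omega_6)$ rather than merely a density matrix with the correct marginals. This is more delicate than in the purely bulk case \ref{as:4} because $\Omega_6$ straddles the boundary, so the condensation channels $\{\alpha : N_a^{\alpha}\ge 1\}$ must be accommodated and the vacuum-on-$BC$ constraint reconciled with the fact that the full $\Omega_6$ still detects a boundary sector through its boundary cut. I would address this by invoking \ref{as:G3} together with \ref{as:S3} to establish local indistinguishability from the ground state on every contractible patch of $\Omega_6$, and by checking that the resulting object is the anticipated mixture $\sum_{\alpha} P_{(a\to\alpha)}\,\sigma^{(a,\alpha)_{\textrm{max}}}_{\Omega_6}$ over condensation channels consistent with \ref{as:1''}, which both confirms membership in $\Sigma(\Omega_6)$ and closes the argument.
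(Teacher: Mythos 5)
Your core route is indeed the paper's: create the pair $(a,\bar a)$ with the bulk string operator of \ref{as:S1} supported on $AA'$ kept strictly away from the boundary, invoke \ref{as:G3} for the partition $(AA')BC$, use invariance of conditional mutual information under a unitary supported on $AA'$ alone together with monotonicity under the partial trace over $A'$ to get $I(A:C\vert B)=0$, read off the $AB$ marginal $\sigma^a_{AB}$ from \ref{as:2}, and settle uniqueness by the fact that a quantum Markov state is determined by its marginals on $AB$ and $BC$, the $BC$ marginal being forced to $\sigma^1_{BC}$ by \ref{as:1'}. That skeleton is sound and matches the paper's proof.

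Two specific points in your write-up are wrong, however. First, your parenthetical alternative of using the bulk-to-boundary string $U^{(a,\bar\alpha)}$ of \ref{as:S3} fails: the paper explicitly remarks that applying $U^{(a,\bar{\alpha})}$ to the ground state, as in Fig.~\ref{Bulk_Boundary_ABC}(a), does \emph{not} in general produce a state with vanishing $I(A:C\vert B)$ for the partition of Fig.~\ref{Bulk_Boundary_ABC}(b). Physically, an anyon brought in through the boundary ``has met the boundary'': it sits in a definite condensation channel, i.e. in some particular $\Sigma^{\alpha}_{a}(\Omega_6)$, rather than in the maximal-entropy Markov state, so the construction must use \ref{as:S1} only. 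Second, your closing step---certifying membership in $\Sigma(\Omega_6)$ by checking that the state equals $\sum_{\alpha} P_{(a\to\alpha)}\,\sigma^{(a,\alpha)_{\mathrm{max}}}_{\Omega_6}$---is circular: that decomposition and the values $P_{(a\to\alpha)}=N_a^{\alpha}d_{\alpha}/d_a$ are \emph{derived} in Sec.~\ref{Section:condensation} from Proposition~\ref{as:2''} together with \ref{as:1''}, so they cannot serve as inputs to its proof. No such check is needed: the globally constructed state $U^{(a,\bar a)}\vert\psi\rangle$ has excitations only inside $A'$, hence none within or around $\Omega_6=ABC$, so its reduction to $\Omega_6$ lies in $\Sigma(\Omega_6)$ directly by the definition of the information convex.
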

The proof of proposition~\ref{as:2''} will be presented in Sec.~\ref{Section_Info}. For now, we simply point out that the proof needs \ref{as:G3}, \ref{as:S1}, \ref{as:2} and SSA.

\subsection{The derivation of $d_a=\sum_{\alpha} N_a^{\alpha} d_{\alpha}$}

We present a derivation of the formula $d_a=\sum_{\alpha} N_a^{\alpha} d_{\alpha}$, see Eq.~(\ref{eq:conservation of quantum dimension}), and the probability in Eq.~(\ref{Condensation_Prob}) from Assumptions \ref{as:G3}, \ref{as:S1}, \ref{as:1}, \ref{as:1'}, \ref{as:2}, \ref{as:2'} and \ref{as:1''}. Again, the derivation is essentially the same as that in Sec.~\ref{Sec_Bulk_da}, and we save our explanations in many places.

First, we apply Proposition~\ref{as:2''} to both $(a,\alpha)$ and $(1,1)$ sectors and use \ref{as:1}, \ref{as:1'} and \ref{as:2}  to derive
\begin{equation}
S(\sigma^{(a)}_{\Omega_6}) =S(\sigma^1_{\Omega_6}) +2 f(a). \label{eq:condensation-2}
\end{equation}
Note that $\sigma^{(1)}_{\Omega_6}=\sigma^1_{\Omega_6}$ follows from \ref{as:1''} and proposition~\ref{as:2''}.

Next, we notice that  $\sigma^{(a)}_{\Omega_6}$ defined in proposition~\ref{as:2''} must be an element with maximal entropy among the elements of $\Sigma(\Omega_6)$ which have topological charge $a$  on the  entanglement cut surrounded by $\Omega_1$ in Fig.~\ref{Bulk_Boundary_ABC}(a). This follows from SSA,  \ref{as:1}, \ref{as:1'} and \ref{as:2}.
With \ref{as:1''}, we express $\sigma^{(a)}_{\Omega_6}$ as a convex combination of elements in $\bigcup_{\alpha} \Sigma_{a}^{\alpha}(\Omega_6)$, and calculate its entropy with the help of \ref{as:2} and \ref{as:2'}. By maximizing the von Neumann entropy, we find

\begin{equation}
S(\sigma^{(a)}_{\Omega_6} ) =S(\sigma^1_{\Omega_6}) + f(a) +\ln (\sum_{\alpha} N_{a}^{\alpha} e^{F(\alpha)} ) \label{eq:condensation-1}
\end{equation}
and
\begin{equation}
\sigma^{(a)}_{\Omega_6} =\sum_{\alpha} P_{(a\to\alpha)} \sigma_{\Omega_6}^{(a,\alpha)_{\textrm{max}}}, \label{eq:p6}
\end{equation} 
where  $\sigma_{\Omega_6}^{(a,\alpha)_{\textrm{max}}}$ is the maximal-entropy element of $\Sigma^{\alpha}_{a}(\Omega_6)$ and
$P_{(a \to\alpha)} =\frac{N_{a}^{\alpha} e^{F(\alpha)}}{ \sum_{\beta} N_{a}^{\beta} e^{F(\beta)} }$.

We have obtained two expressions of $S(\sigma^{(a)}_{\Omega_6} )$ in Eq.~(\ref{eq:condensation-2}) and Eq.~(\ref{eq:condensation-1}). By comparing them and plugging in the values $f(a)=\ln d_a$ and $F(\alpha)=\ln d_{\alpha}$ obtained in Eq. (\ref{eq:d_a}) and Eq. (\ref{eq:d_alpha}), we get 
\begin{equation}
d_{a} =\sum_{\alpha} N_{a}^{\alpha} d_{\alpha}. \label{eq:conservation of quantum dimension}
\end{equation}
Now we can rewrite the probability in Eq.~(\ref{eq:p6}) as 
\begin{equation}
P_{(a\to\alpha)} = \frac{N_a^{\alpha} d_{\alpha} }{d_a}. \label{Condensation_Prob}
\end{equation}
Later we will identify the physical meaning of probability $P_{(a\to\alpha)}$; see Sec.~\ref{Sec_Fusion probability}.

\section{Some proofs, probabilities, circuit depth and more}\label{Section_Info}

In this section, we present a collection of results about the quantum states and the universal properties of topologically ordered systems. Note that many of the results have been obtained from other methods. The primary purpose is to present a different logic, namely, that these results can be derived from information-theoretic constraints.
After reviewing some useful facts about conditional mutual information in Sec.~\ref{Sec_info_1}, we provide proofs of Proposition~\ref{as:4}, \ref{as:4'}, and \ref{as:2''} in Sec.~\ref{Sec_info_2}. In Sec.~\ref{Sec_Fusion probability}, we identify the physical meanings of the probabilities $P_{(a\times b\to c)}$, $P_{(\alpha\times\beta\to\gamma)}$ and $P_{(a\to\alpha)}$ [from Eq.~(\ref{eq:Bulk Fusion Prob}), (\ref{eq:boundary fusion prob}), and (\ref{Condensation_Prob}), respectively]. In Sec.~\ref{Sec._Depth}, we discuss the circuit depth of unitary string operators. In Sec.~\ref{Sec. Markov}, we discuss some additional results which rely on a more advanced result, namely, the ability to merge certain quantum Markov states.

\subsection{Some useful facts about conditional mutual information}\label{Sec_info_1}

Recall that SSA \cite{1973JMP....14.1938L} says
$I(A:C\vert B)_{\rho}\ge 0$, $\forall\,\rho_{ABC}$. A state $\rho_{ABC}$ saturates SSA if $I(A:C\vert B)_{\rho}=0$. Such a state  is called a quantum Markov state (QMS)\cite{Petz1987,2004CMaPh.246..359H}.

Some basic properties for a general state are as follows:
\begin{enumerate}
	\item Let $A=A_1A_2$ and $C=C_1C_2$; then
	\begin{eqnarray}
     I(A_2:C_2\vert B) &\le&	I(A:C\vert B)   \label{Ineq_1}\\
	 I(A_1:C_1\vert A_2 B C_2) &\le&  I(A:C\vert B) \label{Ineq_2}.
	\end{eqnarray}
	\item If $\sigma$ and $\sigma'$ are related by unitary transformations on individual subsystems, i.e.,
	\begin{equation}
	\sigma'_{ABC}=(U_A U_B U_C) \,\sigma_{ABC} \, (U_{A} U_B U_C)^{\dagger}\nonumber
	\end{equation}
	 then
	\begin{equation}
	I(A:C\vert B)_{\sigma}=I(A:C\vert B)_{\sigma'}. \label{Equality_Unitary}
	\end{equation}
\end{enumerate}
Inequalities (\ref{Ineq_1}) and (\ref{Ineq_2}) are derived using SSA.
For a QMS $\rho$, $I(A:C\vert B)_{\rho}=0$ and the inequalities above become equalities, i.e.,
\begin{eqnarray}
 I(A_2:C_2\vert B)_{\rho} &=& I(A:C\vert B)_{\rho} =0, \label{Equality_trace}\\
 I(A_1:C_1\vert A_2 B C_2)_{\rho} &=&  I(A:C\vert B)_{\rho} =0. \label{Equality_move}
\end{eqnarray}
We further notice that a QMS $\rho_{ABC}$ with $I(A:C\vert B)_{\rho}=0$ is uniquely determined by its reduced density matrices $\rho_{AB}$ and $\rho_{BC}$ \footnote{There is a explicit expression of the state $\rho_{ABC}=\rho_{AB}^{\frac{1}{2}}\rho_{B}^{-\frac{1}{2}}\rho_{BC}\rho_{B}^{-\frac{1}{2}}\rho_{AB}^{\frac{1}{2}}$ \cite{2003RvMaP..15...79P}. We also notice a robust version of this statement proved using Kim-Ruskai inequality \cite{2014JMP....55i2201K}, see Theorem 1 of \cite{2014arXiv1405.0137K}.  }.

\subsection{The proof of \ref{as:4}, \ref{as:4'} and \ref{as:2''} }\label{Sec_info_2}

In this subsection, we provide the proofs of propositions \ref{as:4}, \ref{as:4'} and \ref{as:2''} using assumptions described in Sec.~\ref{ground state and excitations}. More precisely, we will show the following: 
\begin{enumerate}
	\item \{ \ref{as:G1}, \ref{as:S1}, \ref{as:2} \} $\Rightarrow$ proposition \ref{as:4}.
	\item \{ \ref{as:G2}, \ref{as:S2}, \ref{as:2'} \} $\Rightarrow$ proposition \ref{as:4'}.
	\item \{ \ref{as:G3}, \ref{as:S1}, \ref{as:2} \} $\Rightarrow$ proposition \ref{as:2''}.
\end{enumerate}
 Because the three proofs are essentially the same, we choose to discuss the proof of Proposition~\ref{as:4} in detail and then briefly discuss the proof of the other two propositions.

\begin{figure}[h]
	\centering
	\includegraphics[scale=1.0]{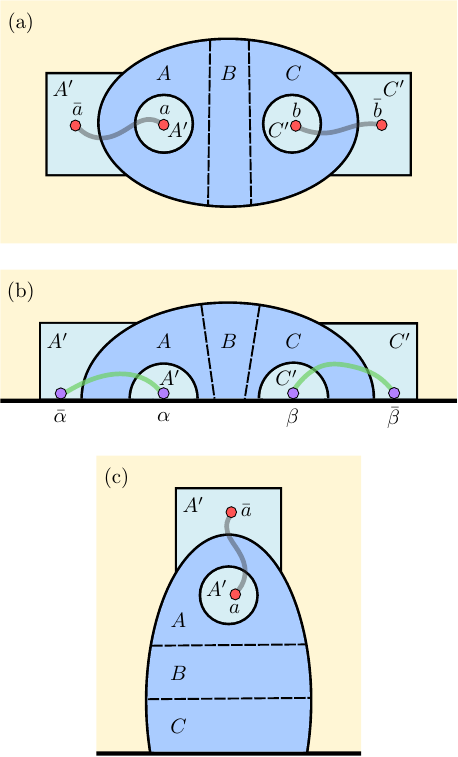}
	\caption{(a) Subsystems $AA',B,CC'$ are in the bulk. The unitary string operator $U^{(a,\bar{a})}$ lies in $AA'$ and the anyon pair $(a,\bar{a})$ it creates is located within $A'$. The unitary string operator $U^{(b,\bar{b})}$ lies in $CC'$ and the anyon pair $(b,\bar{b})$ it creates is located within $C'$. (b) Subsystems $AA',B,CC'$ are attached to a gapped boundary.  The unitary string operator $U^{(\alpha,\bar{\alpha})}$ lies in $AA'$ and the boundary topological excitations  $(\alpha,\bar{\alpha})$ it creates are  located within $A'$. The unitary string operator $U^{(\beta,\bar{\beta})}$ lies in $CC'$ and the boundary topological excitations $(\beta,\bar{\beta})$ it creates are located within $C'$.  (c) Subsystems $AA',B,C$ where $C$ attaches to the boundary and $AA'B$ is in the bulk (away from the boundary).  The unitary string operator $U^{(a,\bar{a})}$ lies in $AA'$ and the anyons   $(a,\bar{a})$ it creates are located within $A'$. }
	\label{44_Proof}
\end{figure}

 The proof of Proposition~\ref{as:4} contains an explicit construction of the state $\sigma^{(a,b)}_{\Omega_4}$. Let us consider the bulk subsystems $AA'$, $B$, $CC'$ shown in Fig.~\ref{44_Proof}(a), $\Omega_4=ABC$. According to \ref{as:G1},
\begin{equation}
 I(AA' : CC'\vert B)_{\sigma^1}=0.\nonumber
\end{equation}
Let us apply unitary string operators $U^{(a,\bar{a})}$ and $U^{(b,\bar{b})}$, the support of which is within $AA'$ and $CC'$, respectively. These unitary string operators are guaranteed to exist according to \ref{as:S1}. The anyon pair $(a,\bar{a})$  is in $A'$ and the anyon pair $(b,\bar{b})$  is in $C'$. Let
\begin{equation}
{\sigma}^{(a,b)}_{\Omega_4}\equiv\textrm{tr}_{A'C'}\big[ (U^{(a,\bar{a})} U^{(b,\bar{b})}) \sigma^1_{AA' BCC'}  (U^{(a,\bar{a})} U^{(b,\bar{b})})^{\dagger} \big]. \nonumber
\end{equation}
According to Eq. (\ref{Equality_Unitary}) and Eq. (\ref{Equality_trace}), we have
\begin{equation}
I(A:C\vert B)_{{\sigma}^{(a,b)}}=0.\nonumber
\end{equation}
Moreover, ${\sigma}^{(a,b)}_{\Omega_4}\in \Sigma(\Omega_4)$ since there are no excitations within and around $ABC$ before tracing out $A'$ and $C'$.
 Also notice that ${\sigma}^{(a,b)}_{\Omega_4}$ has topological charges $a$, $b$ in the two holes and therefore, according to \ref{as:2}, it prepares the extreme points $\sigma^a_{AB}$ and $\sigma^b_{BC}$. 
 Thus, the density matrix ${\sigma}^{(a,b)}_{\Omega_4}$ satisfies all the conditions required in proposition~\ref{as:4}. Furthermore, this QMS state ${\sigma}^{(a,b)}_{\Omega_4}$ is uniquely determined by its reduced density matrices  $\sigma^a_{AB}$ and $\sigma^b_{BC}$. This completes the proof of proposition~\ref{as:4}.

This proof provides a physical construction of $\sigma^{(a,b)}_{\Omega_4}$. After comparing it with the construction using the extreme points of $\Sigma(\Omega_4)$, in Eq. (\ref{eq:sigma(ab)}),
 one can identify the physical interpretation of $P_{(a\times b\to c)}$; see Sec.~\ref{Sec_Fusion probability}.

The proof of proposition~\ref{as:4'} is completely parallel given \ref{as:G2}, \ref{as:S2}, \ref{as:2'} and the idea is illustrated in Fig.~\ref{44_Proof}(b). Furthermore, the proof of proposition~\ref{as:2''} is also parallel given \ref{as:G3}, \ref{as:S1}, \ref{as:2} and the idea is illustrated in Fig.~\ref{44_Proof}(c). 

In comparison, we notice that applying $U^{(a,\bar{\alpha})}$ onto the ground state, (which is guaranteed to exist by \ref{as:S3}), as is shown in Fig.~\ref{Bulk_Boundary_ABC}(a), in general does not give us a state with vanishing conditional mutual information for the partition in Fig.~\ref{Bulk_Boundary_ABC}(b).

\subsection{Physical interpretation of probabilities}\label{Sec_Fusion probability}

In this subsection, we discuss the physical interpretations for the probabilities $P_{(a\times b\to c)}$ in Eq. (\ref{eq:Bulk Fusion Prob}),  $P_{(\alpha\times\beta\to\gamma)}$ in Eq. (\ref{eq:boundary fusion prob}), and $P_{(a\to\alpha)}$ in Eq. (\ref{Condensation_Prob}).

The physical interpretation for the probability
\begin{equation}
P_{(a\times b\to c)}= \frac{  N_{ab}^c d_c }{ d_a d_b }\nonumber
\end{equation}
is the probability for a pair of independently created anyons $a$, $b$ to fuse into $c$. Here, we say two anyons are independently created if they are created separately by two string operators acting on nonoverlapping supports; see Fig.~\ref{44_Proof}(a). We will call $P_{(a\times b\to c)}$ the \emph{fusion probability}.
 This physical interpretation has been explored in the literature; see \cite{preskill1998lecture}.
The information-theoretic considerations in this paper rederive this physical interpretation. In more detail, this physical interpretation follows from two facts:
\begin{enumerate}
	\item $\sigma^{(a,b)}_{\Omega_4}$ is the reduced density matrix of a state with $(a,\bar{a})$, $(b,\bar{b})$ created by two separated unitary strings on the ground state; see Fig.~\ref{44_Proof}(a).
	\item By restricting $\sigma^{(a,b)}_{\Omega_4}$ onto $\Omega''_1$, see Fig.~\ref{Bulk_Restriction}(a), we get
	\begin{equation}
	\textrm{tr}_{\Omega_4 \backslash \Omega''_1}\, \sigma^{(a,b)}_{\Omega_4} = \sum_{c} P_{(a\times b\to c)} \sigma^c_{\Omega''_1}. \nonumber
	\end{equation}
\end{enumerate}

From the same reasoning, one arrives at the physical interpretation for the probability
\begin{equation}
P_{(\alpha\times\beta\to\gamma)}=\frac{  N_{\alpha\beta}^{\gamma} d_{\gamma} }{d_{\alpha} d_{\beta}}\nonumber
\end{equation}
as the probability for a pair of independently created boundary excitations $\alpha$, $\beta$ to fuse into $\gamma$. Here, we say a pair of boundary excitations are independently created if they are created separately by two string operators acting on nonoverlapping supports; see Fig.~\ref{44_Proof}(b). We will call $P_{(\alpha\times\beta\to\gamma)}$ the \emph{fusion probability}.

Furthermore, it can be shown by the same method that the physical interpretation for the probability
\begin{equation}
	P_{(a\to \alpha)}= \frac{N_a^{\alpha}{ d_{\alpha} }}{ d_a}\nonumber
\end{equation}
 is  the probability for a bulk anyon $a$, which had never met the boundary, to condense into a boundary topological excitation $\alpha$.
 Here, we say that an anyon had never met the boundary if it is created by a string operator supported in the bulk (away from the boundary); see Fig.~\ref{44_Proof}(c). We call $P_{(a\to \alpha)}$ the \emph{condensation probability}.
  In contrast, the anyon $a$ in  Fig.~\ref{Bulk_Boundary_ABC}(a) has met the boundary.

\subsection{Circuit depth of string operators} \label{Sec._Depth}

The circuit depth of a unitary operator is a measure of how complex a unitary operator is, from the viewpoint of a fixed basis (i.e., the real space). In this section, we consider how complex a unitary operator we need in order to propagate topological excitations by a certain distance. Therefore, we discuss the  circuit depth of the unitary string operators which create these excitations, both in the bulk and in the presence of a gapped boundary. In particular, we will consider the unitary string operators $U^{(a,\bar{a})}$  and  $U^{(\alpha,\bar{\alpha})}$ which  are  guaranteed to exist by \ref{as:S1}, \ref{as:S2}.  We further consider the generalizations shown in Fig.~\ref{Unitary_String_Depth}.

The discussion in this section is of the same spirit as \cite{2010PhRvB..82o5138C,Haah2016} which consider the minimal circuit depth needed in order to convert a topologically ordered ground state into a product state. Because two states within the same gapped phase may (approximately) be related by a finite-depth quantum circuit, the circuit depth of unitary string operators (either finite depth  or a depth scale with some length scale) is a universal property of a gapped phase.

For an Abelian anyon $a$,  $U^{(a,\bar{a})}$ is consistent with a finite-depth quantum circuit. In fact, in many exactly solvable models, an Abelian anyon string is a depth-1 quantum circuit. The depth is independent of the string length, i.e., the distance separation between the anyons.
However, we do not know whether Abelian strings are finite-depth quantum circuits in general.

In this subsection, we derive a concrete result: the circuit depth of a non-Abelian anyon string is at least linear to the distance separation of the anyon pair \footnote{This linear bound might be related to a property of Wilson loop operator in non-Abelian gauge theory  \cite{2002PhRvD..65f5022B}.}. 
We further discuss how the result generalizes and changes in the presence of a gapped boundary.

\begin{figure}[h]
	\centering
	\includegraphics[scale=1.0]{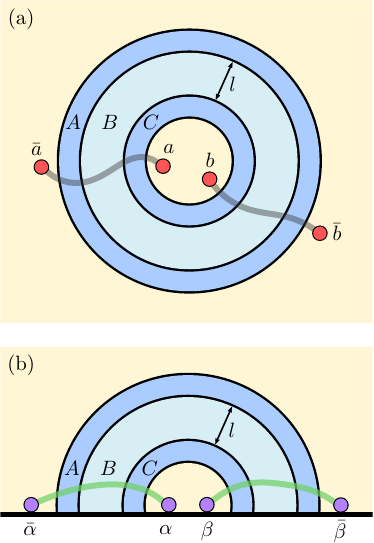}
	\caption{An illustration of the subsystem choices, unitary string operators and the anyons or boundary topological excitations the string operators create. $l>\epsilon$ is the lattice distance between $A$ and $C$. (a) $ABC$ and $A$, $B$, $C$ are subsystems of $\Omega_1$ topology. They are in the bulk. (b) $ABC$ and $A$, $B$, $C$ are subsystems of $\Omega_2$ topology. They attach to a gapped boundary.}
	\label{Probability_1}	
\end{figure}

Let us consider the state with bulk string $U^{(a,\bar{a})}$  and $U^{(b,\bar{b})}$ acting on a ground state shown in Fig.~\ref{Probability_1}(a). The subsystems $A$, $C$, and $ABC$ are of the same topology, i.e., the topology of $\Omega_1$. According to the discussion in Sec.~\ref{Sec_Fusion probability}, the density matrices on $A$, $C$ and $ABC$ are the  following convex combination of extreme points:
\begin{equation}
\sigma_{\Omega_1}^{(a,b)}= \sum_c P_{(a\times b\to c)}\sigma^c_{\Omega_1},\quad \Omega_1 = A,C,ABC,
\end{equation}
where $P_{(a\times b\to c)}=\frac{N_{ab}^c d_c}{d_a d_b}$. In particular, let us consider $b=\bar{a}$ and $a$ is non-Abelian. The fusion rule says
\begin{equation}
a\times \bar{a} =1 + \cdots.
\end{equation}
For any non-Abelian anyon $a$, we always have $d_a>1$ and $P_{(a \times \bar{a}\to 1)}=1/{d_a^2}<1$ and there must be fusion results other than the vacuum $1$. 
Thus,
\begin{equation}
\begin{aligned}
&I(A:C)_{\sigma^{(a,\bar{a})}_{ABC}}\\
&= \sum_c P_{(a\times\bar{a}\to c)} (I(A:C)_{\sigma^c} -\ln P_{(a\times\bar{a}\to c)})\\
&\ge -\sum_c P_{(a\times \bar{a}\to c)}\ln P_{(a\times \bar{a}\to c)}\\
&> 0. \label{Mutual_Info_Nontrivial}
\end{aligned}
\end{equation}
In the second line, we used \ref{as:2}. In the third line, we used $I(A:C)\ge 0$ for any state \footnote{In fact, it could be replaced by ``$=$" since it is known that  $I(A:C)_{\sigma^{c}_{ABC}}=0$ for each extreme point (from topological field theory and replica trick \cite{2015arXiv150807006J}, and it is also possible to verify this result in lattice models). Nevertheless, we do not need this property in this proof.}.  In the fourth line, we used the fact that $a$ is non-Abelian.
On the other hand,  \ref{as:G0} tells us
\begin{equation}
I(A:C)_{\sigma^1}=0. \label{Mutual_Info_trivial}
\end{equation}
Comparing Eq. (\ref{Mutual_Info_trivial}) with Eq. (\ref{Mutual_Info_Nontrivial}), we find that the mutual information $I(A:C)$ is changed from zero to a positive number. This cannot be done by any quantum circuit with depth smaller than $(l-\epsilon)/2$. Here $l$ is the minimal lattice distance between $A$ and $C$ shown in Fig.~\ref{Probability_1}(a).  For a large separation between $a$ and $\bar{a}$, $l$ can be approximately as large as the separation.

To summarize, we have proved that the unitary operator $U^{(a,\bar{a})}$ has a circuit depth at least linear in the distance between $a$ and $\bar{a}$ if the anyon  $a$ is non-Abelian.

In the presence of a gapped boundary, the above result has a generalization. Meanwhile, one should be aware of a certain difference.

First, for the case shown in Fig.~\ref{Probability_1}(b), the unitary string operator $U^{(\alpha,\bar{\alpha})}$ has a circuit depth at least linear in the distance between $\alpha$ and $\bar{\alpha}$ if the boundary topological excitation $\alpha$ is non-Abelian.  This fact follows from the same logic as above.

Second, in the presence of a gapped boundary, e.g., the configuration in Fig.~\ref{Unitary_String_Depth}(b), the linear bound for the circuit depth of a string operator [creating a non-Abelian $(a,\bar{a})$ pair] may be relaxed. This happens when the following two conditions are satisfied:
\begin{enumerate}
	\item There is an Abelian boundary topological excitation $\alpha$ in the condensation channel of $a$.
	\item The Abelian boundary topological excitation pair $(\alpha,\bar{\alpha})$ can be created with a constant-depth quantum circuit.
\end{enumerate}
Under these two conditions, the minimal circuit depth of  $U^{(a,\bar{\alpha})}$ and $\tilde{U}^{(a,\bar{a})}$ in Fig.~\ref{Unitary_String_Depth} would scale with $l$ instead of $L$.

\begin{figure}[h]
	\centering
	\includegraphics[scale=1.0]{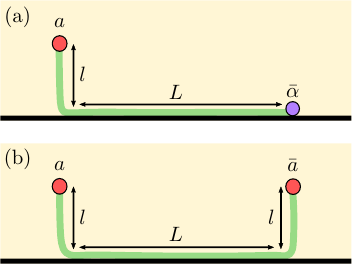}
	\caption{String operators touching a gapped boundary. An Abelian boundary topological excitation $\alpha$ is in the condensation channel of a non-Abelian anyon $a$. (a) The string  operator $U^{(a,\bar{\alpha})}$. (b) The string operator $\tilde{U}^{(a,\bar{a})}$.}
	\label{Unitary_String_Depth}	
\end{figure}

\subsection{Additional results based on the merging of quantum Markov states}\label{Sec. Markov}

In this subsection, we derive some additional results using a deeper property of QMSs. The key technique comes from \cite{2016PhRvA..93b2317K}, in which  the authors show  how to merge a pair of QMSs into a global state. 
Recall that a density matrix $\rho_{ABC}$ is a QMS if $I(A:C\vert B)_{\rho}=0$.

For the subsystems $A$, $B_1B_2$, $C$ in Fig.~\ref{B1B2}, the ground-state density matrix $\sigma^1$ satisfies
	\begin{eqnarray}
	I(A:B_2\vert B_1)_{\sigma^1}&=& I(B_1:C\vert B_2)_{\sigma^1}=0\nonumber\\
	\sigma^1_{AB_2}&=&\sigma^1_{A}\otimes \sigma^1_{B_2}.\nonumber
	\end{eqnarray}
Then it follows from the construction in \cite{2016PhRvA..93b2317K} that
    there exists a state $\tilde{\sigma}_{ABC}$ ($B=B_1B_2$) such that
	\begin{eqnarray}
	\tilde{\sigma}_{AB} &=& \sigma^1_{AB}, \nonumber\\
	\tilde{\sigma}_{BC} &=& \sigma^1_{BC},\nonumber \\
	I(A:C\vert B)_{\tilde{\sigma}}&=&0, \nonumber \\
	\tilde{\sigma}_{AC} &=& \sigma^1_{A}\otimes \sigma^1_{C}= \sigma^1_{AC}. \nonumber
	\end{eqnarray}
	Note that the  discussion applies to all three cases Fig.~\ref{B1B2}(a), Fig.~\ref{B1B2}(b), and Fig.~\ref{B1B2}(c).

For the  subsystems $A$, $B$, $C$ in Fig.~\ref{2nd_way_to_divide_Omega6}, for each boundary superselection sector $\alpha$, there exists a state $\tilde{\sigma}^{(\alpha)}_{ABC}$, such that
	\begin{eqnarray}
	\tilde{\sigma}^{(\alpha)}_{AB} &=& {\sigma}^{\alpha}_{AB}\nonumber\\
	\tilde{\sigma}^{(\alpha)}_{BC} &=& \sigma^1_{BC}\nonumber\\
	I(A:C\vert B)_{\tilde{\sigma}^{(\alpha)}} &=& 0 \nonumber\\
	\tilde{\sigma}^{(\alpha)}_{AC}&=&\sigma^{\alpha}_{A}\otimes \sigma^1_{C}.\nonumber
	\end{eqnarray}

\begin{figure}[h]
	\centering
	\includegraphics[scale=1.0]{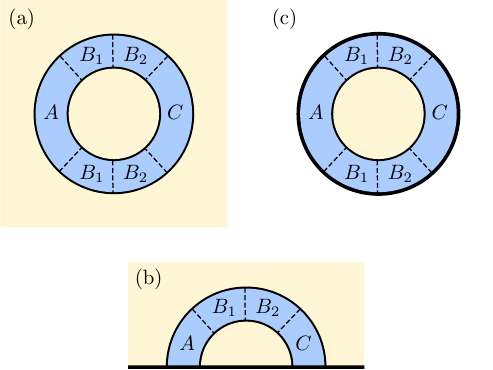}
	\caption{A subsystem $\Omega$ is divided into $ABC$, $B=B_1B_2$. (a)  $\Omega=\Omega_1$ is an annulus inside the bulk. (b)  $\Omega=\Omega_2$  attaches to a gapped boundary by two connected components. (c)  $\Omega=\Omega_3$ is an annulus covering a gapped boundary. }
	\label{B1B2}	
\end{figure}

\begin{figure}[h]
	\centering
	\includegraphics[scale=1.0]{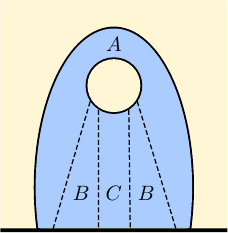}
	\caption{Subsystem $\Omega_6$ is divided into $ABC$.}
	\label{2nd_way_to_divide_Omega6}	
\end{figure}

Because $\tilde{\sigma}_{ABC}$  looks like a ground state locally, we argue that \footnote{We do not  have a proof of this argument based on the assumptions shown in this paper, but the bulk version of this argument is proved under axioms {\bf A0} and {\bf A1} in~\cite{2019arXiv190609376S}.}  it is an element in a certain information convex. Then, it has to be a certain maximal-entropy element. This leads to the following results. (We omit the details of the derivations because they are essentially the same as those shown in Sec.~\ref{Sec_Bulk_da}.)
\begin{itemize}
	\item Let $ABC=\Omega_1$ as shown in Fig.~\ref{B1B2}(a); we have
	\begin{equation}
	\tilde{\sigma}_{\Omega_1}=\sum_{a}\frac{d_a^2}{\mathcal{D}^2} \sigma^a_{\Omega_1}.
	\end{equation}
    \item  Let $ABC=\Omega_2$ as shown in Fig.~\ref{B1B2}(b); we have
   	\begin{equation}
	\tilde{\sigma}_{\Omega_2}=\sum_{\alpha}\frac{d_{\alpha}^2}{\big(\sum_{\beta} d_{\beta}^2 \big)}  \sigma^{\alpha}_{\Omega_2}.
	\end{equation}
	\item Let $ABC=\Omega_3$ as shown in Fig.~\ref{B1B2}(c); we have
	\begin{equation}
	\tilde{\sigma}_{\Omega_3}= \sum_a \frac{N_{a}^1 d_a}{(\sum_{b}N_{b}^1 d_{b})} \sigma^{(a)_{\textrm{max}}}_{\Omega_3}.
	\end{equation}
    Here, $\sigma^{(a)_{\textrm{max}}}_{\Omega_3}$ is the maximal-entropy element of $\Sigma(\Omega_3)$ which has charge $a$ on the entanglement cut.
	\item Let $ABC=\Omega_6$ as shown in Fig.~\ref{2nd_way_to_divide_Omega6}; we have
	\begin{equation}
	\tilde{\sigma}^{(\alpha)}_{\Omega_6} =\sum_a \frac{ N_a^{\alpha} d_a }{( \sum_b N_b^{\alpha} d_b )} \sigma^{(a,\alpha)_{\textrm{max}}}_{\Omega_6}.
	\end{equation}
	Here, $\sigma^{(a,\alpha)_{\textrm{max}}}_{\Omega_6}$ is the maximal-entropy element of $\Sigma_{a}^{\alpha}(\Omega_6)$.
\end{itemize}

From these, one can further show the following:
\begin{itemize}
	\item The TEE \cite{2006PhRvL..96k0404K,2006PhRvL..96k0405L} $\gamma$ can be derived  from any one of $\Omega_1$, $\Omega_2$ and $\Omega_3$:
	\begin{equation}
	\gamma = \ln \mathcal{D} =\ln (\sum_{\alpha} d_{\alpha}^2) = \ln (\sum_a N_a^1 d_a).
	\end{equation}
	with a mild assumption that the universal piece depends solely on the topology. Here $\mathcal{D}=\sqrt{\sum_a d_a^2}$ is the total quantum dimension.
	This further implies
	\begin{equation}
	\sqrt{\sum_a d_{a}^2} =\sum_{\alpha}d_{\alpha}^2 = \sum_a N_a^1 d_a .\label{eq: Sum of quantum dimension}
	\end{equation}
	
	\item The  $\Omega_6$ case in Fig.~\ref{2nd_way_to_divide_Omega6} further implies 
	\begin{equation}
	 \sum_{a} N_a^{\alpha} d_a = d_{\alpha}\mathcal{D}. \label{eq:Sum of da is d_alpha D}
	\end{equation}
\end{itemize}
	In fact, Eq. (\ref{eq: Sum of quantum dimension}) could be derived from Eq.~(\ref{eq:conservation of quantum dimension}) and Eq.~(\ref{eq:Sum of da is d_alpha D}). 	Both    Eq.~(\ref{eq:conservation of quantum dimension}) and Eq.~(\ref{eq:Sum of da is d_alpha D}) have been observed in some examples \footnote{These results are obtained as theorems in \cite{2017CMaPh.355..645C}, but the physical context is different because the reference describes confined boundary excitations.}. We show that these results emerge from information-theoretic considerations.

\section{Summary and discussion}\label{Sec:summary}
We have presented a derivation of the $\ln d_{a}$ contribution to the von Neumann entropy by looking into the information-theoretic consistency of the information convex structure. The information-theoretic consistency is from the strong subadditivity and the conditional independence in the ground state of topologically ordered systems. A certain topological way of encoding the fusion theory in the information convex is assumed. Notably, we assume that fusion multiplicities are coherently encoded in the information convex of a 2-hole disk. 
The proof generalizes to gapped boundaries. A derivation of the $\ln d_{\alpha}$ entropy contribution from boundary topological excitation $\alpha$ is presented. Also following from this method are certain constraints of the fusion theory, identifying the fusion probabilities and a linear bound on the circuit depth of non-Abelian unitary string operators.

This derivation points to a different perspective of the origin of the universal properties of a topologically ordered system, namely the consistency with quantum mechanics of the many-body system. We believe that the initial steps presented in this work could be further developed in two directions. First, the assumptions may be simplified. A new theoretical framework may emerge in this direction. Second, we expect the key method to apply to a broader physical context.  We expect that this work will extend our toolbox and provide us with a logic generalizable to 3D topological orders which have potentially linked or knotted looplike excitations.
Because gapped boundaries are special cases of gapped domain walls, it is natural to think about a generalization of this method to the study of gapped domain walls \cite{2012CMaPh.313..351K,2019PhRvB..99o5134M}. Furthermore, because the generic gapped domain walls between two topologically ordered phases are closely related to anyon condensation \cite{2014NuPhB.886..436K}, this may further allow us to study properties of anyon condensation.

Our method applies to topologically ordered systems with both Abelian and non-Abelian anyons. It provides a certain perspective on the density matrix structure of low-energy states. We notice that quantum error correction with non-Abelian anyons remains as a challenging problem \cite{2014PhRvX...4c1058B,2017CMaPh.355..519D}. It might be interesting to see whether this work provides any useful perspective on solving this problem.

The method in this paper requires the finiteness of correlation length. This is why we have to restrict to gapped boundaries. Moreover, only the fusion properties are considered. It would be desirable to find ways to go beyond these limitations so that one may study the boundary of a 2D system with a generic chiral central charge $c_{-}$ and to handle the braiding properties, e.g., the topological spins ($\theta_a$). In particular, it is an intriguing question as to whether it is possible to derive the formula
$e^{2\pi i c_{-}/8}=\mathcal{D}^{-1}\sum_{a} d_{a}^2 \theta_a$
in a suitable generalization of our method. We leave this for future studies.

\emph{Note added.} 
An important progress \cite{2019arXiv190609376S} has been made after this work is posted. In particular, a new definition of information convex is proposed which takes a single quantum state as the input. For a closed manifold, versions of \ref{as:G0}, \ref{as:G1}, \ref{as:S1}, \ref{as:1}, \ref{as:2}, \ref{as:3}, \ref{as:F1}, \ref{as:F2}, \ref{as:F3}, \ref{as:F4}, \ref{as:F5} are derived from two local entropic constraints originally proposed by Kim \cite{2014arXiv1405.0137K} (namely axioms {\bf A0} and {\bf A1} of Ref.~\cite{2019arXiv190609376S}). Therefore, the results in this paper which are derived from these assumptions hold whenever these simpler conditions hold.

\section*{Acknowledgements}
I thank Yuan-Ming Lu for helpful discussions during and after a previous project. The key step of this work was obtained by the author during PiTP 2018 ``From Qubits to Spacetime" at IAS; the lecture by Atish Dabholkar, ``Exact Quantum Entropy of
Black Holes," led the author to think about a related problem. I thank Yanjun He for a discussion on fusion probabilities.
I have benefited from discussion with and feedback from Sebas Eli\"ens, Tarun Grover, Jeongwan Haah, Timothy  Hsieh, Taylor Hughes, Ling-Yan Hung, Liang Kong, Chong Wang, Xueda Wen, Dominic Williamson, and Beni Yoshida. I thank Kohtaro Kato for explaining to me the ``merging technique" during a conference at KITP 2017.
I give special acknowledgment to Isaac~H.~Kim for recommending two entropic constraints at the beginning of the collaboration on Ref.~\cite{2019arXiv190609376S}; these two entropic constraints turned out to be the foundation of a more complete framework.  
This work is supported by the National Science Foundation under Grant No. NSF DMR-1653769.

\appendix

\section{String operators vs.  unitary string operators}\label{appendix:string}
In the literature, there are discussions of string operators and  unitary string operators. For example, Ref.~\cite{2006AnPhy.321....2K} defines an excitation with a nontrivial superselection sector to be an excitation which cannot be created by any local operator, and it is argued that topological excitations could be moved or created using operators acting on a string.  While in known exactly solvable models Abelian anyon strings are unitary (e.g., in the toric code model), there are nonunitary string operators which could create a pair of non-Abelian topological excitations when acting on a ground state, e.g., the ribbon operators in non-Abelian quantum double models \cite{2003AnPhy.303....2K,2008PhRvB..78k5421B} and the string operators in string-net models \cite{2005PhRvB..71d5110L}. 
On the other hand, some references consider unitary string operators which create a pair of generic topological excitations, e.g., \cite{2015PhRvB..92k5139K,2019PhRvB..99c5112S}. This is because  unitarity  is relevant to some information-theoretic properties. 

In this section, we make clear the relation between these two seemingly different requirements. In particular, we  prove Proposition~\ref{Prop_Unitary}, which says that on a ground state satisfying \ref{as:G0}, any  operator supported on a subsystem could be replaced by a corresponding unitary operator supported on a slightly thicker subsystem (thicker by the scale $\epsilon$ in \ref{as:G0}). 

\begin{Proposition}\label{Prop_Unitary}
	Let $X$ be an operator supported on subsystem $A$. $A_{\epsilon}$ is a subsystem thicker than $A$ by the length scale $\epsilon$ required in \ref{as:G0}. 
	 If $X\vert \psi\rangle \ne 0$, then there exists a unitary operator $U(X)$ supported on $A_{\epsilon}$ such that
	\begin{equation}
	c \cdot X\vert \psi\rangle =  U(X) \vert \psi\rangle 
	\end{equation}
	for a state $\vert \psi\rangle$ satisfying \ref{as:G0}. Here $c$ is a complex number which fixes the normalization.
\end{Proposition}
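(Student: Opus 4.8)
The plan is to prove the statement by showing that the \emph{normalized} state $X\vert\psi\rangle/\Vert X\vert\psi\rangle\Vert$ and the ground state $\vert\psi\rangle$ possess identical reduced density matrices on the complement $\bar{A}_\epsilon$, and then to invoke the unitary equivalence of purifications to produce the desired unitary supported on $A_\epsilon$. The only nontrivial physical input is \ref{as:G0}; everything else is standard quantum information theory.

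First I would extract the factorization from \ref{as:G0}. Since $A_\epsilon$ is obtained by thickening $A$ by the scale $\epsilon$, the buffer region $A_\epsilon\setminus A$ separates $A$ from $\bar{A}_\epsilon$ by a distance at least $\epsilon$, so \ref{as:G0} gives $I(A:\bar{A}_\epsilon)_{\sigma^1}=0$. Using the standard equivalence between vanishing mutual information and a product state, this means
\begin{equation}
\sigma^1_{A\bar{A}_\epsilon}=\sigma^1_{A}\otimes\sigma^1_{\bar{A}_\epsilon}. \nonumber
\end{equation}
Next I would compare the two states on $\bar{A}_\epsilon$. For any operator $O$ supported on $\bar{A}_\epsilon$, the operators $X,X^{\dagger}$ are supported on $A\subseteq A_\epsilon$ and hence commute with $O$; combining this with the product form above yields
\begin{equation}
\langle\psi\vert X^{\dagger} O X\vert\psi\rangle=\langle\psi\vert X^{\dagger}X\,O\vert\psi\rangle=\langle\psi\vert X^{\dagger}X\vert\psi\rangle\,\langle\psi\vert O\vert\psi\rangle. \nonumber
\end{equation}
Dividing by the normalization $\langle\psi\vert X^{\dagger}X\vert\psi\rangle$, which is nonzero precisely because $X\vert\psi\rangle\ne 0$, shows that $\textrm{tr}_{A_\epsilon}$ of the normalized state $X\vert\psi\rangle$ equals $\sigma^1_{\bar{A}_\epsilon}=\textrm{tr}_{A_\epsilon}\vert\psi\rangle\langle\psi\vert$, since $O$ was arbitrary on $\bar{A}_\epsilon$.

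Finally, both $\vert\psi\rangle$ and $X\vert\psi\rangle/\Vert X\vert\psi\rangle\Vert$ are then purifications of the single density matrix $\sigma^1_{\bar{A}_\epsilon}$, living in the \emph{same} Hilbert space with $A_\epsilon$ as the purifying factor. By the unitary freedom of purifications (Uhlmann's theorem), there exists a unitary $U(X)$ supported on $A_\epsilon$ with $U(X)\vert\psi\rangle=X\vert\psi\rangle/\Vert X\vert\psi\rangle\Vert$, and taking $c=1/\Vert X\vert\psi\rangle\Vert$ gives the claim. I expect the main point requiring care to be the first step: one must confirm that the $\epsilon$-thick buffer genuinely separates $A$ from $\bar{A}_\epsilon$ so that \ref{as:G0} applies and delivers \emph{exact} factorization. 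The concluding purification step is then immediate and dimension-matched, precisely because both states live in one and the same Hilbert space.
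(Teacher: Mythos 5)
Your proof is correct and takes essentially the same route as the paper's: both use \ref{as:G0} to obtain the factorization $\sigma^1_{A\bar{A}_\epsilon}=\sigma^1_A\otimes\sigma^1_{\bar{A}_\epsilon}$, conclude that the normalized state $X\vert\psi\rangle/\Vert X\vert\psi\rangle\Vert$ and $\vert\psi\rangle$ share the same reduced density matrix on $\bar{A}_\epsilon$, and then invoke the unitary freedom of purifications with $A_\epsilon$ as the purifying factor. The only difference is cosmetic --- you establish the equality of reduced states via expectation values of arbitrary observables on $\bar{A}_\epsilon$, whereas the paper manipulates the partial trace directly.
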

\begin{proof}
	Let $\vert \varphi(X)\rangle \equiv  c \cdot X\vert \psi\rangle $ and  $\sigma^{(X)}\equiv \vert \varphi(X)\rangle \langle \varphi(X)\vert$. It follows from \ref{as:G0} that $I(A:\bar{A_{\epsilon}})_{\sigma^1}=0$ and therefore
	$\sigma^{(X)}_{A\bar{A}_{\epsilon}} = \sigma^{(X)}_{A}\otimes \sigma^1_{\bar{A}_{\epsilon}}$,
	where $\sigma^{(X)}_{A}\equiv \vert c\vert^2 X\sigma^1_A X^{\dagger}$ is a reduced density matrix and $\bar{A_{\epsilon}}$ is the complement of $A_{\epsilon}$. In other words, $\vert \varphi(X)\rangle$ and $\vert \psi\rangle$ have the same reduced density matrix on $\bar{A}_{\epsilon}$.
	Therefore, there exists a unitary operator $U(X)$ supported on $A_{\epsilon}$ such that $
	\vert \varphi(X)\rangle =U(X) \vert \psi\rangle$. The last step follows from a general result \cite{preskill1998lecture}: if $\vert \psi_{AB}\rangle$ and $\vert \varphi_{AB}\rangle$ have the same reduced density matrix on $A$, then there is a unitary operator $U_{B}$ supported on $B$ such that $\vert \varphi_{AB}\rangle=U_{B}\vert \psi_{AB}\rangle$. This completes the proof.
\end{proof}

\section{Monotonicity of fidelity} \label{Appendix_Monotonicity}
We provide a short proof of the statement $\rho_{AB}\cdot \sigma_{AB}=0$ if $\rho_{A}\cdot \sigma_{A}=0$. In  words, two density matrices are orthogonal on $AB$ if they are orthogonal on $A$.  While this result could be proved using straightforward calculation, we provide a proof based on \emph{the monotonicity of fidelity} \cite{preskill1998lecture}. The advantage of this method is that it applies to the situation that two density matrices are approximately orthogonal.

Given any two density matrices $\rho$ and $\sigma$, the fidelity $F(\rho,\sigma)$ is defined as
\begin{equation}
F(\rho,\sigma)\equiv \bigg( \textrm{tr}\sqrt{\rho^{\frac{1}{2}}\sigma \rho^{\frac{1}{2}}} \bigg)^2.
\end{equation}
It follows that $F(\rho,\sigma)\in [0,1]$. Its value measures the overlap between two density matrices $\rho$ and $\sigma$. $F(\rho,\sigma)=0$ if and only if $\rho\cdot \sigma=0$ and $F(\rho,\sigma)=1$ if and only if $\rho=\sigma$.

The monotonicity of fidelity states that
\begin{equation}
F(\rho_{AB},\sigma_{AB})\le F(\rho_A,\sigma_A).
\end{equation}
If $\rho_A\cdot \sigma_A=0$, then $F(\rho_{AB},\sigma_{AB})\le F(\rho_A,\sigma_A)=0$. $F(\rho_{AB},\sigma_{AB})=0$ is the only choice because it cannot be negative. Therefore, $\rho_{AB}\cdot\sigma_{AB}=0$.

\section{Quantum dimensions from fusion}\label{Appendix_Fusion}

The general definition of a quantum dimension for a relatively generic fusion category (without the assumption that $N_{ij}^k=N_{ji}^k$) can be found in Appendix E of \cite{2006AnPhy.321....2K}. Note that we do need to avoid this assumption since the category describing the boundary theory does not have this property in general.  
In this appendix, we provide a self-contained proof of the existence and uniqueness of the quantum dimension from a set of assumptions listed below. This set of assumptions is general enough to apply for bulk anyons  and boundary topological excitations. In this proof, a key step is the Perron-Frobenius theorem for matrices with positive entries.

Let us first briefly review the  Perron-Frobenius theorem. Here an $N\times N$ matrix $[A]$ is called a matrix with positive entries if  we have matrix element $[A]_{ij}>0$ for any $i,j\in \{1,\cdots , N \}$.
\begin{theorem}[Perron-Frobenius]
	Let $[A]$ be an $N\times N$ matrix with positive entries; then the following statements hold:
	\begin{itemize}
		\item There is a real and positive number $r$ being an eigenvalue of $[A]$ and $\vert \lambda\vert <r$ for all other eigenvalues $\lambda$ of $[A]$. Here $\lambda$ can be complex.
		\item $r$ is simple. In other words, it corresponds to a single $1\times 1$ Jordan block.
		\item There exists a vector $\vert v\rangle =(v_1,\cdots, v_N)^T$ and $\vert w \rangle =(w_1,\cdots, w_N)^T$, with real $v_i$ and real $w_i$ $\forall i\in \{1,\cdots, N \}$, such that
		\begin{equation}
		[A]\vert v\rangle =r \,\vert v\rangle, \quad  [A]^T \vert w\rangle = r\,\vert w\rangle.
		\end{equation}
		Such $\vert v\rangle$ and $\vert w\rangle $ are unique up to rescaling. Here ``$\,T$" means transpose.
		\item $\vert v\rangle $ is the only non-negative eigenvector of $[A]$. Similarly, $\vert w\rangle $ is the only non-negative eigenvector of $[A]^T$. Here, a non-negative eigenvector is an eigenvector with  non-negative entries.
	\end{itemize}
\end{theorem}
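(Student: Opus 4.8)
The plan is to establish the four claims in the order \emph{existence}, \emph{strict spectral dominance}, \emph{simplicity}, and \emph{uniqueness}, using the strict positivity of the entries only at the two points where it is genuinely indispensable. First I would produce a strictly positive eigenpair by a fixed-point argument. Let $\Delta=\{x\in\mathbb{R}^N : x_i\ge 0,\ \sum_i x_i=1\}$ be the probability simplex and consider the map $f(x)=[A]x/\|[A]x\|_1$. This is well defined and continuous on $\Delta$ because $[A]x$ has strictly positive entries whenever $x\in\Delta$ (so in particular $\|[A]x\|_1>0$). Brouwer's fixed-point theorem then yields $v\in\Delta$ with $[A]v=r\,v$ and $r=\|[A]v\|_1>0$; positivity of the entries forces $v=[A]v/r>0$. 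Applying the same construction to $[A]^T$ gives a strictly positive left vector $w>0$ with $[A]^T w = s\,w$, and the identity $s\,w^{T}v = w^{T}[A]v = r\,w^{T}v$, together with $w^{T}v>0$, forces $s=r$. Thus $[A]$ and $[A]^{T}$ share the Perron value $r$, with positive right and left eigenvectors $v$ and $w$.

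Next I would prove $|\lambda|\le r$ for every eigenvalue and then upgrade it to strict inequality. For any eigenpair $[A]z=\lambda z$, writing $|z|$ for the entrywise modulus, the triangle inequality gives $|\lambda|\,|z|\le [A]|z|$ entrywise. Pairing with $w>0$ and using $w^{T}[A]=r\,w^{T}$ yields $|\lambda|\,w^{T}|z|\le r\,w^{T}|z|$, so $|\lambda|\le r$ because $w^{T}|z|>0$. To get strictness, suppose $|\lambda|=r$: then the slack $[A]|z|-r|z|\ge 0$ is annihilated by $w^{T}$, hence vanishes, so $[A]|z|=r|z|$ and therefore $|z|>0$. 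The equality case of the triangle inequality with strictly positive coefficients $A_{ij}$ then forces all the $z_j$ to share a common phase, so $z=e^{i\theta}|z|$ and $\lambda=r$. This shows $r$ is the only eigenvalue of modulus $r$ and $|\lambda|<r$ for all others.

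For simplicity, geometric simplicity comes from a squeezing argument: given any real eigenvector $v'$ for $r$, set $t=\min_i v'_i/v_i$ so that $v'-t\,v\ge 0$ has a zero entry; being an eigenvector for $r$, it would satisfy $[A](v'-t\,v)=r(v'-t\,v)$, and positivity of $[A]$ would make it strictly positive unless it is zero, so $v'\propto v$. Algebraic simplicity follows because a Jordan-chain vector $y$ with $([A]-rI)y=v$ would give $0=w^{T}([A]-rI)y=w^{T}v>0$, a contradiction; hence $r$ sits in a single $1\times 1$ block. Uniqueness of the non-negative eigenvectors is then immediate: any $y\ge 0$, $y\neq 0$ with $[A]y=\mu y$ has $[A]y>0$, so $y>0$ and $\mu>0$, and pairing with $w$ forces $\mu=r$, whence $y\propto v$ by geometric simplicity; the identical argument for $[A]^{T}$ pins down $w$.

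The hard part will be the strict-dominance step, since it is the \emph{only} place where strict positivity of the entries (not merely non-negativity) is essential: it is needed both to promote $[A]|z|=r|z|$ to $|z|>0$ and to invoke the equality condition of the triangle inequality, and both fail for merely non-negative matrices (a permutation matrix, for instance, has several eigenvalues on the unit circle). A clean alternative to the Brouwer step is the Collatz--Wielandt formula $r=\max_{x\ge 0,\,x\neq 0}\ \min_{i:\,x_i>0}\,([A]x)_i/x_i$, which delivers existence and the extremal property at once; but even with that route the same positivity arguments would still be required for strictness and for simplicity, so they remain the crux.
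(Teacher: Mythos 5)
Your proof is correct, and all four claims are established with valid arguments: the Brouwer fixed-point construction on the simplex for existence of the strictly positive eigenpair, the pairing-with-$w$ trick for $|\lambda|\le r$, the equality-case analysis of the triangle inequality for strictness, the squeezing argument for geometric simplicity, the $w^{T}v>0$ obstruction to a Jordan chain for algebraic simplicity, and the reduction of any non-negative eigenvector to the Perron pair. One point of comparison is worth making, though: the paper does not prove this statement at all. In Appendix C it explicitly ``briefly review[s]'' the Perron--Frobenius theorem as classical background, and then uses it as a black box in the proof of Proposition C.5 (existence and uniqueness of the quantum dimensions $d_i$ as the common positive eigenvector of the fusion matrices $[N_i]$). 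So there is no in-paper argument to measure yours against; what you have supplied is a self-contained proof of the imported ingredient, following the standard route (Brouwer fixed point for existence, Wielandt-style positivity arguments for dominance, simplicity, and uniqueness). Your closing remark is also accurate and relevant to the paper's usage: strict positivity is exactly what the paper manufactures by passing from the non-negative matrices $[N_i]$ to the strictly positive combination $[N(p)]=\sum_i p_i[N_i]$, since individual fusion matrices (like permutation matrices) can have several eigenvalues on the spectral circle. Two trivial points you could tighten: in the geometric-simplicity step you treat only real eigenvectors, so add the one-line remark that the real and imaginary parts of a complex eigenvector for the real eigenvalue $r$ are themselves (real) eigenvectors; and in the uniqueness step the hypothesis $y\ge 0$, $y\ne 0$ already forces $\mu$ real via $\mu y=[A]y>0$, which you use implicitly.
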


Next, we  list some standard assumptions in the fusion theory, which are needed in order to  derive the existence and uniqueness of the quantum dimension. The last assumption is not needed for the proof of existence and uniqueness but is needed in order to show $d_j= d_{\bar{j}}$ and $d_{j}\ge 1$.
\begin{assumption}{F_1}{} \label{as:F1}
	There exists a finite set of labels $\mathcal{C}=\{ i,j,k,\cdots\}$. (This implies that we could take $i=1, \cdots, N$ for a positive integer $N$.) There exist coefficients $N_{ij}^k$ which take non-negative integer values and we call them fusion multiplicities.
\end{assumption}

\begin{assumption}{F_2}{} \label{as:F2}
	Fusion  is associative:
	\begin{equation}
	\sum_m N_{ij}^m N_{mk}^l =\sum_n N_{in}^{l} N_{jk}^n.
	\end{equation}
\end{assumption}

\begin{assumption}{F_3}{} \label{as:F3}
	There exists a unique vacuum sector $1\in \mathcal{C}$ and 
	\begin{equation}
	N_{1i}^j = N_{i1}^j =\delta_{i,j}.
	\end{equation}
\end{assumption}

\begin{assumption}{F_4}{}\label{as:F4} For each label $i\in \mathcal{C}$, there exists a unique antilabel $\bar{i}\in \mathcal{C}$
	such that:
	\begin{equation}
	N_{ij}^1 = N_{ji}^1=\delta_{j,\bar{i}}.
	\end{equation}
\end{assumption}

\begin{assumption}{F_5}{} \label{as:F5}
	\begin{equation}
	N_{ij}^k = N_{\bar{j}\bar{i}}^{\bar{k}}.
	\end{equation}
\end{assumption}
We have finished the discussions about the fusion assumptions. Note that we do not assume $N_{ij}^k=N_{ji}^k$ since the boundary theories do not satisfy this property in general. Next, we provide a few definitions and some simple corollaries which will be useful in the proof.

Let us define $[N_i]$ to be the matrix with component $[N_i]_{jk}= N_{ij}^k$. We further define $[N(p)]\equiv \sum_i p_i [N_i]$, for any probability distribution $\{ p_i\}$ with $p_i>0$, $\forall i$. The following are some corollaries:

\begin{enumerate}
	\item From \ref{as:F1}, we know that each $[N_i]$ is a matrix with non-negative entries.
	\item It follows from  \ref{as:F2} that
	\begin{eqnarray}
	\sum_k N_{ij}^k [N_k] &=& [N_j][N_i],  \label{eq:Fusion_1} \\ 
	\sum_k [N(p)]_{jk} [N_k] &=& [N_j][N(p)]. \label{eq:Fusion_2}
	\end{eqnarray}
	\item It follows from \ref{as:F3} and \ref{as:F4} that 
	\begin{equation}
	\bar{1}=1,\quad \textrm{and}\quad \bar{\bar{i}}=i.
	\end{equation}
	\item From \ref{as:F1}, \ref{as:F2}, \ref{as:F3} and \ref{as:F4}, one can show
	\begin{equation}
	\begin{aligned}
	\sum_{i} N_{ij}^k >0,&\quad \forall\, j,k\\
	\sum_{j} N_{ij}^k >0,&\quad \forall \,i,k\\
	\sum_{k} N_{ij}^k >0,&\quad \forall \, i,j 
	\end{aligned}\label{Positive_useful}
	\end{equation}
	and it follows that
	$[N(p)]$ is a matrix with positive entries. In other words, $[N(p)]_{jk}>0$, $\forall \, j,k$.
	\item It follows from \ref{as:F1}, \ref{as:F2}, \ref{as:F3} and \ref{as:F4} that 
	\begin{equation}
	N_{ij}^k = N_{j\bar{k}}^{\bar{i}}.
	\end{equation}
	\item It follows from \ref{as:F1}, \ref{as:F2}, \ref{as:F3}, \ref{as:F4} and  \ref{as:F5} that
	\begin{equation}
	N_{ij}^k= N_{\bar{i}k}^j =N_{j\bar{k}}^{\bar{i}} =N_{\bar{j}\bar{i}}^{\bar{k}} =N_{\bar{k}i}^{\bar{j}} =N_{k\bar{j}}^i, 
	\end{equation}
	and therefore
	\begin{eqnarray}
	[N_{\bar{i}}] = [N_i]^T.
	\end{eqnarray}
\end{enumerate}

\begin{Proposition} \label{prop_quantum_dim}
	There exists a positive eigenvector $\vert v\rangle =(v_1,\cdots, v_N)^T$, with $v_i>0$, $\forall i\in \{ 1,\cdots, N \}$ such that
	it is the common eigenvector of all the matrices $[N_i]$ and $[N(p)]$ for any $\{ p_i \}$. The quantum dimension $d_i>0$  can be defined from
	\begin{equation}
	[N_i]\vert v\rangle =d_i \vert v\rangle.
	\end{equation}
	The quantum dimension $d_i$ is the largest eigenvalue of $[N_i]$. Furthermore, $\{d_i \}$ is the unique set of positive numbers satisfying the following equation:
	\begin{equation}
	d_i d_j =\sum_k N_{ij}^k d_k. \label{Fusion to proof}
	\end{equation}
\end{Proposition}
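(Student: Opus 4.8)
The plan is to use the Perron--Frobenius theorem as the single analytic input and to let the associativity identities (\ref{eq:Fusion_1}) and (\ref{eq:Fusion_2}) do all the algebraic work. First I would apply Perron--Frobenius to the matrix $[N(p)]$, which has strictly positive entries by Eq.~(\ref{Positive_useful}). This produces a simple largest eigenvalue $r>0$ whose eigenspace is one-dimensional and spanned by a strictly positive vector $|v\rangle=(v_1,\dots,v_N)^T$, and moreover $|v\rangle$ is the unique non-negative eigenvector of $[N(p)]$ up to rescaling. This $|v\rangle$ will be the claimed common eigenvector.

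The heart of the argument is to promote $|v\rangle$ from an eigenvector of $[N(p)]$ to a common eigenvector of every $[N_i]$. Writing $z_k\equiv [N_k]|v\rangle$ and applying (\ref{eq:Fusion_2}) to $|v\rangle$ gives $\sum_k [N(p)]_{jk} z_k = [N_j][N(p)]|v\rangle = r\, z_j$. Assembling the $z_k$ into the matrix $Z$ with $Z_{km}=(z_k)_m$, this reads $[N(p)]Z = rZ$, so every column of $Z$ is an $r$-eigenvector of $[N(p)]$ and hence, by one-dimensionality of the Perron eigenspace, proportional to $|v\rangle$. Thus $Z=|v\rangle\,\vec c^{\,T}$ for some vector $\vec c$, i.e.\ $z_k = v_k\,\vec c$ for all $k$. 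Specializing to $k=1$, where $[N_1]$ is the identity matrix by \ref{as:F3} and hence $z_1=|v\rangle$, pins down $\vec c=|v\rangle/v_1$, so $[N_i]|v\rangle = (v_i/v_1)|v\rangle$. Setting $d_i\equiv v_i/v_1>0$ then furnishes the common eigenvector together with positivity of the $d_i$, and $|v\rangle$ is simultaneously an eigenvector of $[N(p)]$ with eigenvalue $\sum_i p_i d_i$.

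With the eigenvector in hand the remaining claims follow quickly. That $\{d_i\}$ solves Eq.~(\ref{Fusion to proof}) comes from applying (\ref{eq:Fusion_1}) to $|v\rangle$: $d_i d_j|v\rangle = [N_j][N_i]|v\rangle = \sum_k N_{ij}^k d_k|v\rangle$. That $d_i$ is the largest eigenvalue of $[N_i]$ is the standard fact that a non-negative matrix possessing a strictly positive eigenvector attains its spectral radius at that eigenvector. For uniqueness I would run the argument in reverse: any positive solution $\{d_i'\}$ of Eq.~(\ref{Fusion to proof}) defines a strictly positive vector $|d'\rangle$ obeying $[N_i]|d'\rangle = d_i'|d'\rangle$, hence $[N(p)]|d'\rangle = (\sum_i p_i d_i')|d'\rangle$; uniqueness of the non-negative Perron eigenvector of $[N(p)]$ forces $|d'\rangle\propto|v\rangle$, and the normalization $d_1'=1$ (obtained by setting $i=j=1$ in Eq.~(\ref{Fusion to proof}) and using \ref{as:F3}) fixes the proportionality constant, so that $d_i'=d_i$.

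I expect the main obstacle to be exactly this middle step, because the fusion is non-commutative ($N_{ij}^k\neq N_{ji}^k$ in general), so the $[N_i]$ do not commute and cannot be simultaneously diagonalized by any standard spectral theorem. The non-trivial point is that the single associativity relation (\ref{eq:Fusion_2}) is nevertheless strong enough to collapse all the vectors $[N_k]|v\rangle$ onto the one-dimensional Perron eigenspace; careful bookkeeping of which index is summed over is what makes the rank-one conclusion $Z=|v\rangle\,\vec c^{\,T}$ come out correctly.
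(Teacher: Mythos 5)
Your proof is correct and follows essentially the same route as the paper's: Perron--Frobenius applied to the strictly positive matrix $[N(p)]$, associativity in the form of Eq.~(\ref{eq:Fusion_2}) to promote $\vert v\rangle$ to a common eigenvector of all $[N_i]$, Eq.~(\ref{eq:Fusion_1}) to obtain $d_id_j=\sum_k N_{ij}^k d_k$, and a Perron--Frobenius uniqueness statement for the final claim. Your rank-one argument $Z=\vert v\rangle \vec{c}^{\,T}$ is a clean way of filling in the step the paper only calls ``straightforward'' (and yields the nice byproduct $d_i=v_i/v_1$), while your uniqueness step (via the unique non-negative eigenvector of $[N(p)]$ plus the normalization $d_1'=1$) is a minor variant of the paper's (via the spectral radius of each $[N_i]$); both rest on the same theorem.
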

The proof of \ref{prop_quantum_dim} follows from \ref{as:F1}, \ref{as:F2}, \ref{as:F3} and \ref{as:F4}. In fact, these assumptions could be further relaxed if we just want to have a unique positive solution of Eq. (\ref{Fusion to proof}).
\begin{proof}
	Since $[N(p)]$ is a matrix with positive entries, according to  the Perron-Frobenius theorem, there exists a unique positive vector $\vert v\rangle =(v_1,\cdots, v_N)^T$ such that 
	\begin{equation}
	[N(p)] \vert v\rangle =r_{p} \vert v\rangle.
	\end{equation}
	Then by using Eq. (\ref{eq:Fusion_2}) we get
	\begin{equation}
	\sum_k [N(p)]_{jk} [N_k] \vert v\rangle =r_p [N_j] \vert v\rangle.   \label{C15}
	\end{equation}
	Define $\vert v(j)\rangle \equiv [N_j]\vert v\rangle$. Then $\vert v(1)\rangle =\vert v\rangle $ since $[N_1]=1$ is the identity matrix (assumption \ref{as:F3}), and $\vert v(j)\rangle $ is positive because of Eq. (\ref{Positive_useful}) and that $\vert v\rangle $ has only positive entries. Therefore, we can rewrite Eq. (\ref{C15}) into
	\begin{equation}
	\sum_{k} [N(p)]_{jk} \vert v(k)\rangle = r_p \vert v(j)\rangle.
	\end{equation}
	Then it is straightforward to show that $\vert v(j)\rangle \propto \vert v\rangle$ and the proportional coefficient has to be positive; let us call it $d_j$ and therefore
	\begin{equation}
	[N_j]\vert v\rangle =d_j \vert v\rangle,\quad \forall\, j.
	\end{equation}
	It shows that $\vert v\rangle $ is a common eigenvector of all $[N_j]$ and also $[N(p)]$ for an arbitrary probability distribution $\{ p_i \}$.  In fact, $d_j$ has is the largest eigenvalue since $\vert v\rangle $ is positive and $[N_j]$ could be obtained by taking a certain limit of $[N(p)]$.
	
	Next, we use Eq. (\ref{eq:Fusion_1}) to show that $d_i d_j =\sum_k N_{ij}^k d_k$.
	Thus, we have proved that there is at least one positive solution to Eq. (\ref{Fusion to proof}).
	Now let us show that the positive solution of Eq. (\ref{Fusion to proof}) is unique. This follows from the fact that if Eq.~(\ref{Fusion to proof}) holds, and $d_j$ is positive, then $d_j$ has to be the largest eigenvalue of $[N_j]$ and such largest eigenvalue is unique, given the choice of $N_{ij}^k$.
	
	We have finished the proof using assumptions \ref{as:F1}, \ref{as:F2}, \ref{as:F3}, \ref{as:F4}.
\end{proof}
With this proposition, one could define quantum dimension $d_i$ as the unique positive solution of Eq. (\ref{Fusion to proof}). Given that the assumptions apply to both the bulk and the boundary, we conclude that the bulk quantum dimension $d_a$ can be defined as the unique positive solution of $d_a d_b =\sum_c N_{ab}^c {d_c}$, and the boundary quantum dimension can be  defined as the unique positive solution of $d_{\alpha} d_{\beta}= \sum_{\gamma} N_{\alpha\beta}^{\gamma} d_{\gamma}$.

\begin{Proposition}
	\begin{equation}
	d_1=1.
	\end{equation}
\end{Proposition}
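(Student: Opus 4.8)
The plan is to derive $d_1 = 1$ directly from the defining property of the vacuum sector recorded in \ref{as:F3} together with the characterization of quantum dimensions already established in Proposition~\ref{prop_quantum_dim}. The key observation is that \ref{as:F3} states $N_{1j}^k = \delta_{j,k}$, which is precisely the statement that the matrix $[N_1]$, whose entries are $[N_1]_{jk} = N_{1j}^k$, equals the $N\times N$ identity matrix (not merely a diagonal matrix). Fusing with the vacuum acts trivially, and this is exactly what we will exploit.

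First I would recall that Proposition~\ref{prop_quantum_dim} supplies a strictly positive common eigenvector $\vert v\rangle = (v_1,\ldots,v_N)^T$, with $v_i > 0$ for all $i$, satisfying $[N_i]\vert v\rangle = d_i \vert v\rangle$ for every label $i$; in particular $d_1$ is defined by $[N_1]\vert v\rangle = d_1 \vert v\rangle$. Since $[N_1]$ is the identity matrix, we have $[N_1]\vert v\rangle = \vert v\rangle$, and because $\vert v\rangle \neq 0$ this forces $d_1 = 1$. Alternatively, and perhaps more transparently, I would invoke the fusion relation Eq.~(\ref{Fusion to proof}), which every quantum dimension satisfies. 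Setting $i=1$ gives
\begin{equation}
d_1 d_j = \sum_k N_{1j}^k d_k = \sum_k \delta_{j,k}\, d_k = d_j. \nonumber
\end{equation}
Since each $d_j > 0$ by Proposition~\ref{prop_quantum_dim}, one may cancel $d_j$ from both sides to conclude $d_1 = 1$.

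There is essentially no obstacle here: the result is an immediate consequence of the normalization $N_{1j}^k = \delta_{j,k}$, which encodes that the vacuum is a fusion unit. The only point deserving a moment's care is confirming that $[N_1]$ is \emph{literally} the identity matrix rather than some other diagonal matrix, but this is exactly the content of \ref{as:F3}, so the proof reduces to a one-line computation once the earlier proposition is in hand.
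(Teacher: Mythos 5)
Your proposal is correct and takes essentially the same approach as the paper: the paper sets $i=j=1$ in Eq.~(\ref{Fusion to proof}) to obtain $d_1^2=d_1$ and cancels $d_1>0$, while your computation sets $i=1$ with general $j$ and cancels $d_j>0$ --- a trivially equivalent use of \ref{as:F3} together with the positivity guaranteed by Proposition~\ref{prop_quantum_dim}. Your other observation, that $[N_1]$ is literally the identity matrix so that $d_1$ is its eigenvalue on the positive eigenvector $\vert v\rangle$, is merely a restatement of the same fact at the level of the eigenvector characterization and does not constitute a different route.
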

\begin{proof}
	Let us apply  Eq. (\ref{Fusion to proof}) to the vacuum sector
	\begin{equation}
	\begin{aligned}
	d_1 d_{{1}} &=\sum_k N_{1 1}^k d_k \\
	&= N_{1 1}^1 d_1\\
	&=d_1.
	\end{aligned}
	\end{equation}
	Thus, $d_1=1$. 	We have finished the proof using assumptions \ref{as:F1}, \ref{as:F2}, \ref{as:F3}, \ref{as:F4}.
\end{proof}

\begin{Proposition} \label{da=dabar}
	\begin{equation}
	d_i=d_{\bar{i}}.
	\end{equation}
\end{Proposition}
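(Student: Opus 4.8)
The plan is to reduce the claim to the elementary fact that a matrix and its transpose share the same Perron--Frobenius eigenvalue, using the relation $[N_{\bar{i}}] = [N_i]^T$ that was derived as the sixth corollary above. This relation is precisely where assumption \ref{as:F5} enters, which matches the earlier remark that \ref{as:F5} is dispensable for existence and uniqueness but needed exactly for $d_i = d_{\bar{i}}$.

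First I would recall from Proposition~\ref{prop_quantum_dim} that the common positive eigenvector $\vert v\rangle = (v_1,\dots,v_N)^T$, with $v_k > 0$ for all $k$, satisfies $[N_j]\vert v\rangle = d_j\,\vert v\rangle$ for every label $j$; in particular $[N_i]\vert v\rangle = d_i\,\vert v\rangle$ and $[N_{\bar{i}}]\vert v\rangle = d_{\bar{i}}\,\vert v\rangle$. Next I would transpose the first of these: applying the corollary $[N_{\bar{i}}] = [N_i]^T$ to $[N_i]\vert v\rangle = d_i\,\vert v\rangle$ gives $\langle v\vert\,[N_{\bar{i}}] = d_i\,\langle v\vert$. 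Thus $\vert v\rangle$ is a right eigenvector of $[N_{\bar{i}}]$ with eigenvalue $d_{\bar{i}}$ and simultaneously a left eigenvector of $[N_{\bar{i}}]$ with eigenvalue $d_i$.

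I would then sandwich, computing $\langle v\vert\,[N_{\bar{i}}]\,\vert v\rangle$ in two ways: acting to the right yields $d_{\bar{i}}\,\langle v\vert v\rangle$, while acting to the left yields $d_i\,\langle v\vert v\rangle$. Since $\vert v\rangle$ is strictly positive we have $\langle v\vert v\rangle = \sum_k v_k^2 > 0$, so the two expressions can be equated and the common factor cancelled, giving $d_i = d_{\bar{i}}$.

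The only nontrivial ingredient is the corollary $[N_{\bar{i}}] = [N_i]^T$ itself, which rests on the symmetry $N_{ij}^k = N_{\bar{j}\bar{i}}^{\bar{k}}$ of \ref{as:F5} combined with the dual relations already extracted from \ref{as:F1}--\ref{as:F4}; everything afterwards is a one-line linear-algebra cancellation, so I expect no real obstacle once that relation is in hand. As an equally short alternative I would note that $[N_i]$ and $[N_i]^T$ have identical characteristic polynomials, hence the same spectrum, so the largest eigenvalue $d_i$ of $[N_i]$ coincides with the largest eigenvalue of $[N_i]^T = [N_{\bar{i}}]$, which is $d_{\bar{i}}$ by Proposition~\ref{prop_quantum_dim}.
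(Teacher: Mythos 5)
Your proof is correct, and your primary argument is a genuine (if mild) variant of the paper's. The paper's own proof is exactly your ``equally short alternative'': it observes that $d_i$ and $d_{\bar{i}}$ are the largest eigenvalues of $[N_i]$ and $[N_{\bar{i}}]$ respectively, and that $[N_{\bar{i}}]=[N_i]^T$ (this is where \ref{as:F5} enters) forces these largest eigenvalues to coincide, since a matrix and its transpose share a spectrum. Your sandwich argument instead exploits the common strictly positive eigenvector $\vert v\rangle$ from Proposition~\ref{prop_quantum_dim}: transposing $[N_i]\vert v\rangle = d_i\vert v\rangle$ makes $\langle v\vert$ a left eigenvector of $[N_{\bar{i}}]$ with eigenvalue $d_i$, and evaluating $\langle v\vert [N_{\bar{i}}]\vert v\rangle$ both ways cancels the positive factor $\langle v\vert v\rangle$. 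What this buys you is a slightly more self-contained argument: you never need to invoke that $d_{\bar{i}}$ is the \emph{largest} eigenvalue of $[N_{\bar{i}}]$, nor the spectral invariance of transposition --- only the common-eigenvector property already established. What the paper's route buys is brevity and the fact that it isolates the single point where \ref{as:F5} is used; both proofs rest on the same corollary $[N_{\bar{i}}]=[N_i]^T$, and since the entries of $\vert v\rangle$ are real and positive there is no conjugation subtlety in your transposition step, so the argument goes through as written.
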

\begin{proof}
	It follows from two facts:
	\begin{enumerate}
		\item The quantum dimension $d_i$ is the largest positive eigenvalue of $[N_i]$, and $d_{\bar{i}}$ is the largest positive eigenvalue of $[N_{\bar{i}}]$.
		\item  $[N_{\bar{i}}]= [N_i]^T$ and therefore they have the same largest positive eigenvalue. In this step,  \ref{as:F5} is used.
	\end{enumerate}
	We have finished the proof using assumptions \ref{as:F1}, \ref{as:F2}, \ref{as:F3}, \ref{as:F4}, \ref{as:F5}.
\end{proof}

\begin{Proposition} 
	\begin{equation}
	d_i\ge 1,\qquad \forall \,i.
	\end{equation}
	and $d_i>1$ if and only if $i$ and $\bar{i}$ have a fusion channel other than the vacuum $1$.
\end{Proposition}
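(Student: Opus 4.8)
The plan is to exploit the quadratic fusion identity $d_i d_j = \sum_k N_{ij}^k d_k$ (established in Proposition~\ref{prop_quantum_dim}) specialized to the pair $(i,\bar i)$, combined with the normalization $d_1=1$, the symmetry $d_i = d_{\bar i}$ from Proposition~\ref{da=dabar}, and the positivity of the quantum dimensions. First I would set $j = \bar i$ in the fusion identity to get $d_i d_{\bar i} = \sum_k N_{i\bar i}^k d_k$, and then use $d_{\bar i} = d_i$ to rewrite the left-hand side as $d_i^2$.

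The next step is to isolate the vacuum channel. Assumption~\ref{as:F4} gives $N_{i\bar i}^1 = \delta_{\bar i,\bar i} = 1$, so the $k=1$ term contributes exactly $N_{i\bar i}^1 d_1 = 1$. Hence $d_i^2 = 1 + \sum_{k\ne 1} N_{i\bar i}^k d_k$. Now I would invoke positivity: every $d_k > 0$ by Proposition~\ref{prop_quantum_dim}, and every $N_{i\bar i}^k$ is a nonnegative integer by \ref{as:F1}, so the remaining sum is nonnegative. Therefore $d_i^2 \ge 1$, and since $d_i > 0$ this yields $d_i \ge 1$.

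For the sharp characterization, note that $d_i > 1$ is equivalent to $d_i^2 > 1$, which by the displayed identity is equivalent to $\sum_{k\ne 1} N_{i\bar i}^k d_k > 0$. Because all $d_k > 0$, this sum is strictly positive precisely when $N_{i\bar i}^k \ge 1$ for some $k \ne 1$, i.e. precisely when $i$ and $\bar i$ possess a fusion channel other than the vacuum. This establishes the ``if and only if''.

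I expect no serious obstacle: the whole argument is a one-line specialization of the already-derived fusion identity, and the only point requiring care is the bookkeeping that the vacuum channel of $i \times \bar i$ has multiplicity \emph{exactly} one (from \ref{as:F4}), which is what forces the constant term to be exactly $1$ rather than merely nonnegative, and thereby pins the lower bound at $1$ with equality controlled by the nonvacuum channels.
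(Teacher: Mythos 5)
Your proposal is correct and follows essentially the same route as the paper: specialize the fusion identity to $(i,\bar{i})$, use $d_i = d_{\bar{i}}$, bound the sum below by the vacuum term $N_{i\bar{i}}^1 d_1 = 1$, and read off the strict inequality from the presence of nonvacuum channels. Your write-up is in fact slightly more explicit than the paper's on the ``if and only if'' direction, but the underlying argument is identical.
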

\begin{proof}
	It follows from Eq. (\ref{Fusion to proof}) that
	\begin{equation}
	\begin{aligned}
	d_i d_{\bar{i}} &=\sum_k N_{i\bar{i}}^k d_k \\
	&\ge N_{i\bar{i}}^1 d_1 \\
	&=1.
	\end{aligned}
	\end{equation}
	Then applying Proposition~\ref{da=dabar}, we get $d_i^2\ge 1$ and therefore the positive number $d_i\ge 1$. One could further observe that if $1$ is the only fusion channel of $i$ and $\bar{i}$, then $d_i=1$, and if there are other fusion channels, we have $d_i>1$.
	
	We have finished the proof using assumptions \ref{as:F1}, \ref{as:F2}, \ref{as:F3}, \ref{as:F4}, \ref{as:F5}.
\end{proof}

\bibliography{ref}

\begin{thebibliography}{48}
\expandafter\ifx\csname natexlab\endcsname\relax\def\natexlab#1{#1}\fi
\expandafter\ifx\csname bibnamefont\endcsname\relax
  \def\bibnamefont#1{#1}\fi
\expandafter\ifx\csname bibfnamefont\endcsname\relax
  \def\bibfnamefont#1{#1}\fi
\expandafter\ifx\csname citenamefont\endcsname\relax
  \def\citenamefont#1{#1}\fi
\expandafter\ifx\csname url\endcsname\relax
  \def\url#1{\texttt{#1}}\fi
\expandafter\ifx\csname urlprefix\endcsname\relax\def\urlprefix{URL }\fi
\providecommand{\bibinfo}[2]{#2}
\providecommand{\eprint}[2][]{\url{#2}}

\bibitem[{\citenamefont{Wen}(1990)}]{Wen:1989iv}
\bibinfo{author}{\bibfnamefont{X.~G.} \bibnamefont{Wen}},
  \bibinfo{journal}{Int. J. Mod. Phys.} \textbf{\bibinfo{volume}{B4}},
  \bibinfo{pages}{239} (\bibinfo{year}{1990}).

\bibitem[{\citenamefont{Wen and Niu}(1990)}]{PhysRevB.41.9377}
\bibinfo{author}{\bibfnamefont{X.~G.} \bibnamefont{Wen}} \bibnamefont{and}
  \bibinfo{author}{\bibfnamefont{Q.}~\bibnamefont{Niu}},
  \bibinfo{journal}{Phys. Rev. B} \textbf{\bibinfo{volume}{41}},
  \bibinfo{pages}{9377} (\bibinfo{year}{1990}),
  \urlprefix\url{http://link.aps.org/doi/10.1103/PhysRevB.41.9377}.

\bibitem[{\citenamefont{{Chen} et~al.}(2010)\citenamefont{{Chen}, {Gu}, and
  {Wen}}}]{2010PhRvB..82o5138C}
\bibinfo{author}{\bibfnamefont{X.}~\bibnamefont{{Chen}}},
  \bibinfo{author}{\bibfnamefont{Z.-C.} \bibnamefont{{Gu}}}, \bibnamefont{and}
  \bibinfo{author}{\bibfnamefont{X.-G.} \bibnamefont{{Wen}}},
  \bibinfo{journal}{\prb} \textbf{\bibinfo{volume}{82}}, \bibinfo{eid}{155138}
  (\bibinfo{year}{2010}), \eprint{1004.3835}.

\bibitem[{\citenamefont{{Kitaev} and {Preskill}}(2006)}]{2006PhRvL..96k0404K}
\bibinfo{author}{\bibfnamefont{A.}~\bibnamefont{{Kitaev}}} \bibnamefont{and}
  \bibinfo{author}{\bibfnamefont{J.}~\bibnamefont{{Preskill}}},
  \bibinfo{journal}{Physical Review Letters} \textbf{\bibinfo{volume}{96}},
  \bibinfo{eid}{110404} (\bibinfo{year}{2006}), \eprint{hep-th/0510092}.

\bibitem[{\citenamefont{{Levin} and {Wen}}(2006)}]{2006PhRvL..96k0405L}
\bibinfo{author}{\bibfnamefont{M.}~\bibnamefont{{Levin}}} \bibnamefont{and}
  \bibinfo{author}{\bibfnamefont{X.-G.} \bibnamefont{{Wen}}},
  \bibinfo{journal}{Physical Review Letters} \textbf{\bibinfo{volume}{96}},
  \bibinfo{eid}{110405} (\bibinfo{year}{2006}), \eprint{cond-mat/0510613}.

\bibitem[{\citenamefont{{Dong} et~al.}(2008)\citenamefont{{Dong}, {Fradkin},
  {Leigh}, and {Nowling}}}]{2008JHEP...05..016D}
\bibinfo{author}{\bibfnamefont{S.}~\bibnamefont{{Dong}}},
  \bibinfo{author}{\bibfnamefont{E.}~\bibnamefont{{Fradkin}}},
  \bibinfo{author}{\bibfnamefont{R.~G.} \bibnamefont{{Leigh}}},
  \bibnamefont{and}
  \bibinfo{author}{\bibfnamefont{S.}~\bibnamefont{{Nowling}}},
  \bibinfo{journal}{Journal of High Energy Physics}
  \textbf{\bibinfo{volume}{5}}, \bibinfo{eid}{016} (\bibinfo{year}{2008}),
  \eprint{0802.3231}.

\bibitem[{\citenamefont{{Bravyi} and {Kitaev}}(1998)}]{1998quant.ph.11052B}
\bibinfo{author}{\bibfnamefont{S.~B.} \bibnamefont{{Bravyi}}} \bibnamefont{and}
  \bibinfo{author}{\bibfnamefont{A.~Y.} \bibnamefont{{Kitaev}}},
  \bibinfo{journal}{eprint arXiv:quant-ph/9811052}  (\bibinfo{year}{1998}),
  \eprint{quant-ph/9811052}.

\bibitem[{\citenamefont{{Beigi} et~al.}(2011)\citenamefont{{Beigi}, {Shor}, and
  {Whalen}}}]{2011CMaPh.306..663B}
\bibinfo{author}{\bibfnamefont{S.}~\bibnamefont{{Beigi}}},
  \bibinfo{author}{\bibfnamefont{P.~W.} \bibnamefont{{Shor}}},
  \bibnamefont{and} \bibinfo{author}{\bibfnamefont{D.}~\bibnamefont{{Whalen}}},
  \bibinfo{journal}{Communications in Mathematical Physics}
  \textbf{\bibinfo{volume}{306}}, \bibinfo{pages}{663} (\bibinfo{year}{2011}),
  \eprint{1006.5479}.

\bibitem[{\citenamefont{{Kitaev} and {Kong}}(2012)}]{2012CMaPh.313..351K}
\bibinfo{author}{\bibfnamefont{A.}~\bibnamefont{{Kitaev}}} \bibnamefont{and}
  \bibinfo{author}{\bibfnamefont{L.}~\bibnamefont{{Kong}}},
  \bibinfo{journal}{Communications in Mathematical Physics}
  \textbf{\bibinfo{volume}{313}}, \bibinfo{pages}{351} (\bibinfo{year}{2012}),
  \eprint{1104.5047}.

\bibitem[{\citenamefont{{Kong}}(2012)}]{2012arXiv1211.4644K}
\bibinfo{author}{\bibfnamefont{L.}~\bibnamefont{{Kong}}},
  \bibinfo{journal}{ArXiv e-prints}  (\bibinfo{year}{2012}),
  \eprint{1211.4644}.

\bibitem[{\citenamefont{Levin}(2013)}]{Levin2013}
\bibinfo{author}{\bibfnamefont{M.}~\bibnamefont{Levin}},
  \bibinfo{journal}{Phys. Rev. X} \textbf{\bibinfo{volume}{3}},
  \bibinfo{pages}{021009} (\bibinfo{year}{2013}),
  \urlprefix\url{https://link.aps.org/doi/10.1103/PhysRevX.3.021009}.

\bibitem[{\citenamefont{{Lan} et~al.}(2015)\citenamefont{{Lan}, {Wang}, and
  {Wen}}}]{2015PhRvL.114g6402L}
\bibinfo{author}{\bibfnamefont{T.}~\bibnamefont{{Lan}}},
  \bibinfo{author}{\bibfnamefont{J.~C.} \bibnamefont{{Wang}}},
  \bibnamefont{and} \bibinfo{author}{\bibfnamefont{X.-G.} \bibnamefont{{Wen}}},
  \bibinfo{journal}{Physical Review Letters} \textbf{\bibinfo{volume}{114}},
  \bibinfo{eid}{076402} (\bibinfo{year}{2015}), \eprint{1408.6514}.

\bibitem[{\citenamefont{{Hung} and {Wan}}(2015)}]{2015PhRvL.114g6401H}
\bibinfo{author}{\bibfnamefont{L.-Y.} \bibnamefont{{Hung}}} \bibnamefont{and}
  \bibinfo{author}{\bibfnamefont{Y.}~\bibnamefont{{Wan}}},
  \bibinfo{journal}{\prl} \textbf{\bibinfo{volume}{114}}, \bibinfo{eid}{076401}
  (\bibinfo{year}{2015}), \eprint{1408.0014}.

\bibitem[{\citenamefont{{Hu} et~al.}(2018)\citenamefont{{Hu}, {Luo},
  {Pankovich}, {Wan}, and {Wu}}}]{2018JHEP...01..134H}
\bibinfo{author}{\bibfnamefont{Y.}~\bibnamefont{{Hu}}},
  \bibinfo{author}{\bibfnamefont{Z.-X.} \bibnamefont{{Luo}}},
  \bibinfo{author}{\bibfnamefont{R.}~\bibnamefont{{Pankovich}}},
  \bibinfo{author}{\bibfnamefont{Y.}~\bibnamefont{{Wan}}}, \bibnamefont{and}
  \bibinfo{author}{\bibfnamefont{Y.-S.} \bibnamefont{{Wu}}},
  \bibinfo{journal}{Journal of High Energy Physics}
  \textbf{\bibinfo{volume}{2018}}, \bibinfo{eid}{134} (\bibinfo{year}{2018}),
  \eprint{1706.03329}.

\bibitem[{\citenamefont{{Bullivant} et~al.}(2017)\citenamefont{{Bullivant},
  {Hu}, and {Wan}}}]{2017PhRvB..96p5138B}
\bibinfo{author}{\bibfnamefont{A.}~\bibnamefont{{Bullivant}}},
  \bibinfo{author}{\bibfnamefont{Y.}~\bibnamefont{{Hu}}}, \bibnamefont{and}
  \bibinfo{author}{\bibfnamefont{Y.}~\bibnamefont{{Wan}}},
  \bibinfo{journal}{\prb} \textbf{\bibinfo{volume}{96}}, \bibinfo{eid}{165138}
  (\bibinfo{year}{2017}), \eprint{1706.03611}.

\bibitem[{\citenamefont{{Cong} et~al.}(2017)\citenamefont{{Cong}, {Cheng}, and
  {Wang}}}]{2017CMaPh.355..645C}
\bibinfo{author}{\bibfnamefont{I.}~\bibnamefont{{Cong}}},
  \bibinfo{author}{\bibfnamefont{M.}~\bibnamefont{{Cheng}}}, \bibnamefont{and}
  \bibinfo{author}{\bibfnamefont{Z.}~\bibnamefont{{Wang}}},
  \bibinfo{journal}{Communications in Mathematical Physics}
  \textbf{\bibinfo{volume}{355}}, \bibinfo{pages}{645} (\bibinfo{year}{2017}),
  \eprint{1707.04564}.

\bibitem[{\citenamefont{{Chen} et~al.}(2018)\citenamefont{{Chen}, {Hung}, {Li},
  and {Wan}}}]{2018JHEP...06..113C}
\bibinfo{author}{\bibfnamefont{C.}~\bibnamefont{{Chen}}},
  \bibinfo{author}{\bibfnamefont{L.-Y.} \bibnamefont{{Hung}}},
  \bibinfo{author}{\bibfnamefont{Y.}~\bibnamefont{{Li}}}, \bibnamefont{and}
  \bibinfo{author}{\bibfnamefont{Y.}~\bibnamefont{{Wan}}},
  \bibinfo{journal}{Journal of High Energy Physics}
  \textbf{\bibinfo{volume}{2018}}, \bibinfo{eid}{113} (\bibinfo{year}{2018}),
  \eprint{1804.05725}.

\bibitem[{\citenamefont{{Shi} and {Lu}}(2019)}]{2019PhRvB..99c5112S}
\bibinfo{author}{\bibfnamefont{B.}~\bibnamefont{{Shi}}} \bibnamefont{and}
  \bibinfo{author}{\bibfnamefont{Y.-M.} \bibnamefont{{Lu}}},
  \bibinfo{journal}{\prb} \textbf{\bibinfo{volume}{99}}, \bibinfo{eid}{035112}
  (\bibinfo{year}{2019}), \eprint{1801.01519}.

\bibitem[{\citenamefont{{Kitaev}}(2006)}]{2006AnPhy.321....2K}
\bibinfo{author}{\bibfnamefont{A.}~\bibnamefont{{Kitaev}}},
  \bibinfo{journal}{Annals of Physics} \textbf{\bibinfo{volume}{321}},
  \bibinfo{pages}{2} (\bibinfo{year}{2006}), \eprint{cond-mat/0506438}.

\bibitem[{\citenamefont{{Kim}}(2013)}]{2013PhRvL.111h0503K}
\bibinfo{author}{\bibfnamefont{I.~H.} \bibnamefont{{Kim}}},
  \bibinfo{journal}{Physical Review Letters} \textbf{\bibinfo{volume}{111}},
  \bibinfo{eid}{080503} (\bibinfo{year}{2013}), \eprint{1304.3925}.

\bibitem[{\citenamefont{{Kim} and {Brown}}(2015)}]{2015PhRvB..92k5139K}
\bibinfo{author}{\bibfnamefont{I.~H.} \bibnamefont{{Kim}}} \bibnamefont{and}
  \bibinfo{author}{\bibfnamefont{B.~J.} \bibnamefont{{Brown}}},
  \bibinfo{journal}{\prb} \textbf{\bibinfo{volume}{92}}, \bibinfo{eid}{115139}
  (\bibinfo{year}{2015}), \eprint{1410.7411}.

\bibitem[{\citenamefont{Haah}(2016)}]{Haah2016}
\bibinfo{author}{\bibfnamefont{J.}~\bibnamefont{Haah}},
  \bibinfo{journal}{Communications in Mathematical Physics}
  \textbf{\bibinfo{volume}{342}}, \bibinfo{pages}{771} (\bibinfo{year}{2016}),
  ISSN \bibinfo{issn}{1432-0916},
  \urlprefix\url{https://doi.org/10.1007/s00220-016-2594-y}.

\bibitem[{\citenamefont{{Lieb} and {Ruskai}}(1973)}]{1973JMP....14.1938L}
\bibinfo{author}{\bibfnamefont{E.~H.} \bibnamefont{{Lieb}}} \bibnamefont{and}
  \bibinfo{author}{\bibfnamefont{M.~B.} \bibnamefont{{Ruskai}}},
  \bibinfo{journal}{Journal of Mathematical Physics}
  \textbf{\bibinfo{volume}{14}}, \bibinfo{pages}{1938} (\bibinfo{year}{1973}).

\bibitem[{\citenamefont{{Kato} et~al.}(2016)\citenamefont{{Kato}, {Furrer}, and
  {Murao}}}]{2016PhRvA..93b2317K}
\bibinfo{author}{\bibfnamefont{K.}~\bibnamefont{{Kato}}},
  \bibinfo{author}{\bibfnamefont{F.}~\bibnamefont{{Furrer}}}, \bibnamefont{and}
  \bibinfo{author}{\bibfnamefont{M.}~\bibnamefont{{Murao}}},
  \bibinfo{journal}{\pra} \textbf{\bibinfo{volume}{93}}, \bibinfo{eid}{022317}
  (\bibinfo{year}{2016}), \eprint{1505.01917}.

\bibitem[{\citenamefont{Preskill}(1998)}]{preskill1998lecture}
\bibinfo{author}{\bibfnamefont{J.}~\bibnamefont{Preskill}},
  \emph{\bibinfo{title}{Lecture notes for physics 229: Quantum information and
  computation}} (\bibinfo{year}{1998}).

\bibitem[{\citenamefont{Nielsen and Chuang}(2002)}]{nielsen2002quantum}
\bibinfo{author}{\bibfnamefont{M.~A.} \bibnamefont{Nielsen}} \bibnamefont{and}
  \bibinfo{author}{\bibfnamefont{I.}~\bibnamefont{Chuang}},
  \emph{\bibinfo{title}{Quantum computation and quantum information}}
  (\bibinfo{year}{2002}).

\bibitem[{\citenamefont{{Witten}}(2018)}]{2018arXiv180511965W}
\bibinfo{author}{\bibfnamefont{E.}~\bibnamefont{{Witten}}},
  \bibinfo{journal}{arXiv e-prints} \bibinfo{eid}{arXiv:1805.11965}
  (\bibinfo{year}{2018}), \eprint{1805.11965}.

\bibitem[{\citenamefont{{Jian} et~al.}(2015)\citenamefont{{Jian}, {Kim}, and
  {Qi}}}]{2015arXiv150807006J}
\bibinfo{author}{\bibfnamefont{C.-M.} \bibnamefont{{Jian}}},
  \bibinfo{author}{\bibfnamefont{I.~H.} \bibnamefont{{Kim}}}, \bibnamefont{and}
  \bibinfo{author}{\bibfnamefont{X.-L.} \bibnamefont{{Qi}}},
  \bibinfo{journal}{ArXiv e-prints}  (\bibinfo{year}{2015}),
  \eprint{1508.07006}.

\bibitem[{\citenamefont{{Zou} and {Haah}}(2016)}]{2016PhRvB..94g5151Z}
\bibinfo{author}{\bibfnamefont{L.}~\bibnamefont{{Zou}}} \bibnamefont{and}
  \bibinfo{author}{\bibfnamefont{J.}~\bibnamefont{{Haah}}},
  \bibinfo{journal}{\prb} \textbf{\bibinfo{volume}{94}}, \bibinfo{eid}{075151}
  (\bibinfo{year}{2016}), \eprint{1604.06101}.

\bibitem[{\citenamefont{{Williamson} et~al.}(2019)\citenamefont{{Williamson},
  {Dua}, and {Cheng}}}]{2019PhRvL.122n0506W}
\bibinfo{author}{\bibfnamefont{D.~J.} \bibnamefont{{Williamson}}},
  \bibinfo{author}{\bibfnamefont{A.}~\bibnamefont{{Dua}}}, \bibnamefont{and}
  \bibinfo{author}{\bibfnamefont{M.}~\bibnamefont{{Cheng}}},
  \bibinfo{journal}{\prl} \textbf{\bibinfo{volume}{122}}, \bibinfo{eid}{140506}
  (\bibinfo{year}{2019}), \eprint{1808.05221}.

\bibitem[{\citenamefont{{Kitaev}}(2003)}]{2003AnPhy.303....2K}
\bibinfo{author}{\bibfnamefont{A.~Y.} \bibnamefont{{Kitaev}}},
  \bibinfo{journal}{Annals of Physics} \textbf{\bibinfo{volume}{303}},
  \bibinfo{pages}{2} (\bibinfo{year}{2003}), \eprint{quant-ph/9707021}.

\bibitem[{\citenamefont{{Levin} and {Wen}}(2005)}]{2005PhRvB..71d5110L}
\bibinfo{author}{\bibfnamefont{M.~A.} \bibnamefont{{Levin}}} \bibnamefont{and}
  \bibinfo{author}{\bibfnamefont{X.-G.} \bibnamefont{{Wen}}},
  \bibinfo{journal}{\prb} \textbf{\bibinfo{volume}{71}}, \bibinfo{eid}{045110}
  (\bibinfo{year}{2005}), \eprint{cond-mat/0404617}.

\bibitem[{\citenamefont{{Bombin} and
  {Martin-Delgado}}(2008)}]{2008PhRvB..78k5421B}
\bibinfo{author}{\bibfnamefont{H.}~\bibnamefont{{Bombin}}} \bibnamefont{and}
  \bibinfo{author}{\bibfnamefont{M.~A.} \bibnamefont{{Martin-Delgado}}},
  \bibinfo{journal}{\prb} \textbf{\bibinfo{volume}{78}}, \bibinfo{eid}{115421}
  (\bibinfo{year}{2008}), \eprint{0712.0190}.

\bibitem[{\citenamefont{{Shi} et~al.}(2019)\citenamefont{{Shi}, {Kato}, and
  {Kim}}}]{2019arXiv190609376S}
\bibinfo{author}{\bibfnamefont{B.}~\bibnamefont{{Shi}}},
  \bibinfo{author}{\bibfnamefont{K.}~\bibnamefont{{Kato}}}, \bibnamefont{and}
  \bibinfo{author}{\bibfnamefont{I.~H.} \bibnamefont{{Kim}}},
  \bibinfo{journal}{arXiv e-prints} \bibinfo{eid}{arXiv:1906.09376}
  (\bibinfo{year}{2019}), \eprint{1906.09376}.

\bibitem[{\citenamefont{Bravyi et~al.}(2010)\citenamefont{Bravyi, Hastings, and
  Michalakis}}]{bravyi2010topological}
\bibinfo{author}{\bibfnamefont{S.}~\bibnamefont{Bravyi}},
  \bibinfo{author}{\bibfnamefont{M.~B.} \bibnamefont{Hastings}},
  \bibnamefont{and}
  \bibinfo{author}{\bibfnamefont{S.}~\bibnamefont{Michalakis}},
  \bibinfo{journal}{Journal of mathematical physics}
  \textbf{\bibinfo{volume}{51}}, \bibinfo{pages}{093512}
  (\bibinfo{year}{2010}).

\bibitem[{\citenamefont{{Kong}}(2014)}]{2014NuPhB.886..436K}
\bibinfo{author}{\bibfnamefont{L.}~\bibnamefont{{Kong}}},
  \bibinfo{journal}{Nuclear Physics B} \textbf{\bibinfo{volume}{886}},
  \bibinfo{pages}{436} (\bibinfo{year}{2014}), \eprint{1307.8244}.

\bibitem[{\citenamefont{{Bais} and {Slingerland}}(2009)}]{2009PhRvB..79d5316B}
\bibinfo{author}{\bibfnamefont{F.~A.} \bibnamefont{{Bais}}} \bibnamefont{and}
  \bibinfo{author}{\bibfnamefont{J.~K.} \bibnamefont{{Slingerland}}},
  \bibinfo{journal}{\prb} \textbf{\bibinfo{volume}{79}}, \bibinfo{eid}{045316}
  (\bibinfo{year}{2009}), \eprint{0808.0627}.

\bibitem[{\citenamefont{{Bais} et~al.}(2009)\citenamefont{{Bais},
  {Slingerland}, and {Haaker}}}]{2009PhRvL.102v0403B}
\bibinfo{author}{\bibfnamefont{F.~A.} \bibnamefont{{Bais}}},
  \bibinfo{author}{\bibfnamefont{J.~K.} \bibnamefont{{Slingerland}}},
  \bibnamefont{and} \bibinfo{author}{\bibfnamefont{S.~M.}
  \bibnamefont{{Haaker}}}, \bibinfo{journal}{\prl}
  \textbf{\bibinfo{volume}{102}}, \bibinfo{eid}{220403} (\bibinfo{year}{2009}),
  \eprint{0812.4596}.

\bibitem[{\citenamefont{{Eli{\"e}ns} et~al.}(2014)\citenamefont{{Eli{\"e}ns},
  {Romers}, and {Bais}}}]{2014PhRvB..90s5130E}
\bibinfo{author}{\bibfnamefont{I.~S.} \bibnamefont{{Eli{\"e}ns}}},
  \bibinfo{author}{\bibfnamefont{J.~C.} \bibnamefont{{Romers}}},
  \bibnamefont{and} \bibinfo{author}{\bibfnamefont{F.~A.}
  \bibnamefont{{Bais}}}, \bibinfo{journal}{\prb} \textbf{\bibinfo{volume}{90}},
  \bibinfo{eid}{195130} (\bibinfo{year}{2014}), \eprint{1310.6001}.

\bibitem[{\citenamefont{PETZ}(1988)}]{Petz1987}
\bibinfo{author}{\bibfnamefont{D.}~\bibnamefont{PETZ}}, \bibinfo{journal}{The
  Quarterly Journal of Mathematics} \textbf{\bibinfo{volume}{39}},
  \bibinfo{pages}{97} (\bibinfo{year}{1988}), ISSN \bibinfo{issn}{0033-5606},
  \eprint{http://oup.prod.sis.lan/qjmath/article-pdf/39/1/97/4559225/39-1-97.pdf},
  \urlprefix\url{https://doi.org/10.1093/qmath/39.1.97}.

\bibitem[{\citenamefont{{Hayden} et~al.}(2004)\citenamefont{{Hayden}, {Jozsa},
  {Petz}, and {Winter}}}]{2004CMaPh.246..359H}
\bibinfo{author}{\bibfnamefont{P.}~\bibnamefont{{Hayden}}},
  \bibinfo{author}{\bibfnamefont{R.}~\bibnamefont{{Jozsa}}},
  \bibinfo{author}{\bibfnamefont{D.}~\bibnamefont{{Petz}}}, \bibnamefont{and}
  \bibinfo{author}{\bibfnamefont{A.}~\bibnamefont{{Winter}}},
  \bibinfo{journal}{Communications in Mathematical Physics}
  \textbf{\bibinfo{volume}{246}}, \bibinfo{pages}{359} (\bibinfo{year}{2004}),
  \eprint{quant-ph/0304007}.

\bibitem[{\citenamefont{{May-Mann} and {Hughes}}(2019)}]{2019PhRvB..99o5134M}
\bibinfo{author}{\bibfnamefont{J.}~\bibnamefont{{May-Mann}}} \bibnamefont{and}
  \bibinfo{author}{\bibfnamefont{T.~L.} \bibnamefont{{Hughes}}},
  \bibinfo{journal}{\prb} \textbf{\bibinfo{volume}{99}}, \bibinfo{eid}{155134}
  (\bibinfo{year}{2019}), \eprint{1810.03673}.

\bibitem[{\citenamefont{{Brell} et~al.}(2014)\citenamefont{{Brell}, {Burton},
  {Dauphinais}, {Flammia}, and {Poulin}}}]{2014PhRvX...4c1058B}
\bibinfo{author}{\bibfnamefont{C.~G.} \bibnamefont{{Brell}}},
  \bibinfo{author}{\bibfnamefont{S.}~\bibnamefont{{Burton}}},
  \bibinfo{author}{\bibfnamefont{G.}~\bibnamefont{{Dauphinais}}},
  \bibinfo{author}{\bibfnamefont{S.~T.} \bibnamefont{{Flammia}}},
  \bibnamefont{and} \bibinfo{author}{\bibfnamefont{D.}~\bibnamefont{{Poulin}}},
  \bibinfo{journal}{Physical Review X} \textbf{\bibinfo{volume}{4}},
  \bibinfo{eid}{031058} (\bibinfo{year}{2014}), \eprint{1311.0019}.

\bibitem[{\citenamefont{{Dauphinais} and {Poulin}}(2017)}]{2017CMaPh.355..519D}
\bibinfo{author}{\bibfnamefont{G.}~\bibnamefont{{Dauphinais}}}
  \bibnamefont{and} \bibinfo{author}{\bibfnamefont{D.}~\bibnamefont{{Poulin}}},
  \bibinfo{journal}{Communications in Mathematical Physics}
  \textbf{\bibinfo{volume}{355}}, \bibinfo{pages}{519} (\bibinfo{year}{2017}),
  \eprint{1607.02159}.

\bibitem[{\citenamefont{{Kim}}(2014)}]{2014arXiv1405.0137K}
\bibinfo{author}{\bibfnamefont{I.~H.} \bibnamefont{{Kim}}},
  \bibinfo{journal}{arXiv e-prints} \bibinfo{eid}{arXiv:1405.0137}
  (\bibinfo{year}{2014}), \eprint{1405.0137}.

\bibitem[{\citenamefont{{Petz}}(2003)}]{2003RvMaP..15...79P}
\bibinfo{author}{\bibfnamefont{D.}~\bibnamefont{{Petz}}},
  \bibinfo{journal}{Reviews in Mathematical Physics}
  \textbf{\bibinfo{volume}{15}}, \bibinfo{pages}{79} (\bibinfo{year}{2003}),
  \eprint{quant-ph/0209053}.

\bibitem[{\citenamefont{{Kim} and {Ruskai}}(2014)}]{2014JMP....55i2201K}
\bibinfo{author}{\bibfnamefont{I.}~\bibnamefont{{Kim}}} \bibnamefont{and}
  \bibinfo{author}{\bibfnamefont{M.~B.} \bibnamefont{{Ruskai}}},
  \bibinfo{journal}{Journal of Mathematical Physics}
  \textbf{\bibinfo{volume}{55}}, \bibinfo{eid}{092201} (\bibinfo{year}{2014}),
  \eprint{1404.5999}.

\bibitem[{\citenamefont{{Beckman} et~al.}(2002)\citenamefont{{Beckman},
  {Gottesman}, {Kitaev}, and {Preskill}}}]{2002PhRvD..65f5022B}
\bibinfo{author}{\bibfnamefont{D.}~\bibnamefont{{Beckman}}},
  \bibinfo{author}{\bibfnamefont{D.}~\bibnamefont{{Gottesman}}},
  \bibinfo{author}{\bibfnamefont{A.}~\bibnamefont{{Kitaev}}}, \bibnamefont{and}
  \bibinfo{author}{\bibfnamefont{J.}~\bibnamefont{{Preskill}}},
  \bibinfo{journal}{\prd} \textbf{\bibinfo{volume}{65}}, \bibinfo{eid}{065022}
  (\bibinfo{year}{2002}), \eprint{hep-th/0110205}.

\end{thebibliography}
\bibliographystyle{apsrev}
\end{document}